\algnewcommand{\CommentLine}[1]{\(\triangleright\) \emph{\small #1}}
\algnewcommand{\InlineIf}[2]{% single line if-then
  \algorithmicif\ #1\ \algorithmicthen\ #2}
\algnewcommand{\InlineFor}[2]{\algorithmicfor\ #1\ \algorithmicdo\ #2} % single line for loop
\newcommand{\bigO}[1]{\mathchoice{O\left(#1\right)}{O(#1)}{O(#1)}{O(#1)}} % big O for complexity
\newcommand{\timepm}[1]{\mathchoice{\mathsf{M}\left(#1\right)}{\mathsf{M}(#1)}{\mathsf{M}(#1)}{\mathsf{M}(#1)}} % time for polynomial multiplication
\newcommand{\ZZ}{\mathbb{Z}} % relative integers
\newcommand{\NN}{\mathbb{N}} % natural integers
\newcommand{\field}{\mathbb{K}} % base field
\newcommand{\K}{\field} % base field, same as above
\newcommand{\LL}{\mathbb{L}} % notation for algebraic closure of field of rational fractions in x
\newcommand{\pring}{\field[x]} % univariate polynomial ring in x
\newcommand{\ffield}{\field(x)} % its fraction field (univariate rational fractions in x)
\newcommand{\closure}[1]{\overline{#1}} % algebraic closure of a field
\newcommand{\ps}[1]{[\hspace{-0.09cm}[#1]\hspace{-0.08cm}]}
\newcommand{\seq}[1]{(#1_n)_{n\geq0}}
\newcommand{\pbseqterm}{\textsc{SeqTerm}} % problem: finding N-th term of sequence
\newcommand{\pbbivmodpow}{\textsc{BivModPow}} % problem: modular power of bivariate polynomials
\newcommand{\pbpolmatpow}{\textsc{PolMatPow}} % problem: power of polynomial matrix
\newcommand{\myparagraph}[1]{\smallskip\emph{#1.}} % paragraph taking less vertical space
\author{Alin Bostan}
\affiliation{%
%  \institution{Inria and U. Paris-Saclay}
  \institution{Inria}
  \city{Palaiseau}
  \country{France}%
}
\author{Vincent Neiger}
\affiliation{%
	\institution{Sorbonne Universit\'e, \textsc{CNRS}, \textsc{LIP6}}
	\city{F-75005 Paris}
	\postcode{75252}\country{France}}
\author{Sergey Yurkevich}
\affiliation{%
  \institution{University of Vienna and Inria Saclay}
  \city{Vienna}
  \country{Austria}%
}
\title{Beating binary powering for polynomial matrices}
\keywords{%
Algebraic Algorithms;
Computational Complexity;
%Fast Fourier Transform;
FFT;
Binary Powering;
C-finite Sequence;
Rational Power Series;
Linear Differential Equations;
Creative Telescoping;
Polynomial Matrices.
}
\begin{document}

% \fancyhead{}
\newtheorem{remark}[theorem]{Remark}
% this has to be placed here because of sigconf / acmart

\begin{abstract}
The $N$th power of a polynomial matrix of fixed size and degree
can be computed by binary powering as fast as
multiplying two polynomials of linear degree in~$N$.
When Fast Fourier Transform (FFT) is available, the resulting
complexity is \emph{softly linear} in~$N$,
i.e.~linear in~$N$ with extra logarithmic factors.
We show that it is possible to beat binary powering,
by an algorithm whose complexity is \emph{purely linear} in~$N$,
even in absence of FFT.
The key result making this improvement possible
is that the entries of the $N$th power
of a polynomial matrix
satisfy linear differential equations with polynomial coefficients
whose orders and degrees are independent of~$N$.
Similar algorithms are proposed for two related problems:
computing the $N$th term of a C-finite sequence of polynomials,
and modular exponentiation to the power $N$ for bivariate polynomials.
\end{abstract}

\thanks{%
  The authors thank Bruno Salvy for his remarks, and the anonymous
  referees for their reports. The authors are supported by the French project \textsc{De Rerum Natura} (ANR-19-CE40-0018)
 and by the joint French–Austrian project \textsc{EAGLES} 
 (ANR-22-CE91-0007 \& FWF I6130-N). 
 The third author was supported by the
\href{https://www.oeaw.ac.at/en/1/austrian-academy-of-sciences}{ÖAW} DOC
fellowship P-26101.}

\maketitle

\section{Introduction}
\label{sec_intro}

A sequence $(u_n)_{n\geq 0}$ is called \emph{C-finite} if it satisfies a
linear recurrence relation whose coefficients are constant with respect to~$n$.
The famous sequence $(f_n)_{n \geq 0}$ of Fibonacci numbers, defined by the
recurrence $f_{n+2} = f_{n+1} + f_n$ and the initial values $f_0 = 0, f_1=1$,
is perhaps the most basic example of a C-finite sequence after the geometric
ones $(q^n)_{n \geq 0}$. It is classical that the term $f_N$ can be computed in
$O(\log(N))$ arithmetic operations, thus as fast as $q^N$. This can be achieved
by \emph{binary powering} for \(q^N\), and in fact for $f_N$ as well, since
it is the top-right entry of $C^N$ where $C$
is the \(2 \times 2\) companion matrix
$(\begin{smallmatrix} 0 & 1 \\ 1 & 1
\end{smallmatrix})$.
This idea generalizes to any C-finite sequence $(u_n)_{n\geq 0}$: a
recurrence of order $r\geq 1$ for $(u_n)_{n\geq 0}$ can be encoded, via its
\textit{companion matrix}, into an $r\times r$ matrix recurrence of
order~\(1\).  Then the term $u_N$ of the sequence appears as the first entry of 
the product of
the vector
of initial values $(u_0 , \dots , u_{r-1})$ by the
\(N\)th power of this \(r \times r\) companion matrix~\cite{MiBr66,Fiduccia85}. 
Then $u_N$ can be computed in $O(\log(N))$ arithmetic operations, and in \(O(N
\log(N))\) bit operations if $(u_n)_{n\geq 0}$ is an integer sequence, using
fast integer multiplication \cite{HaHo21}. Here \(r\) is considered
constant, i.e., \(r \in \bigO{1}\).

Fibonacci polynomials $F_n(x)$ are a natural generalization of
Fibonacci numbers (see e.g. ~\cite{Byrd63}).
 They are defined by the recurrence
\begin{equation} \label{eq:def_Fib}
  F_{n+2}(x) = x  F_{n+1}(x) + F_n(x) \quad \text{for } n \geq 0
\end{equation}
and the initial values $F_0(x) = 0, F_1(x)=1$. The first few terms are
\(
  \seq{F} = (0,1,x, x^2+1, x^3+2x, x^4+3x^2+1, \dots ).
\)
Obviously, for all $n\geq 1$, the polynomial $F_n(x)$ is monic of degree $n-1$
and the sum of its coefficients is $F_n(1) = f_n$.

Given $N \in \NN$, the direct iterative algorithm for computing $F_N(x)$ has
complexity $O(N^2)$. It computes, for each \(n\le N\), all the \(n\)
coefficients of the intermediate polynomial $F_n(x)$; in total this amounts to
$\Theta(N^2)$ coefficients. Therefore, if one wants to compute all of
\((F_0,\ldots,F_N)\) then this direct method is optimal with respect to the
total arithmetic size of the output. However, it becomes quadratic if one is
only interested in determining $F_N(x)$ alone.

To compute the polynomial $F_N(x)$ faster, one can use,
as in the scalar case, the
reformulation of the second-order recurrence (\ref{eq:def_Fib}) as a
first-order (polynomial) matrix recurrence:
\begin{equation} \label{eq:fibCmatrix}
\begin{pmatrix}
    F_{n} & F_{n+1} \\F_{n+1} & F_{n+2}
\end{pmatrix} =
\begin{pmatrix}
0 & 1\\
1 & x
\end{pmatrix}
\begin{pmatrix}
    F_{n-1} & F_n \\F_{n} & F_{n+1}
\end{pmatrix}.
\end{equation}
This shows that $F_n(x)$ is the top-right entry of the matrix $C(x)^n$, where
$C(x)$ is the \(2 \times 2\) companion matrix $C(x) = (\begin{smallmatrix} 0 &
1 \\ 1 & x \end{smallmatrix})$. One can again compute $C(x)^N$ using binary
powering, whose costliest step is the multiplication of two polynomial matrices
of degree about $N/2$. This yields $F_N(x)$ in
complexity $O(\timepm{N})$, where $\timepm{N}$ denotes the
cost of polynomial multiplication in degree at most~$N$.

Using FFT-based polynomial multiplication~\cite{CaKa91}, 
this amounts to a number of
operations in the base field~$\K$ which is quasi-linear in~\(N\). 
Not only does this compare favorably to the complexity
$O(N^2)$ of the direct iterative algorithm, but this is even quasi-optimal
(i.e., optimal up to logarithmic factors) with respect to the arithmetic size
$\Theta(N)$ of the output polynomial $F_N(x)$.

In this context, the idea also generalizes to any C-finite sequence
$(u_n(x))_{n\geq 0}$ of polynomials in $\K[x]$, which we will call
\emph{polynomial C-finite sequences}. Indeed, one can encode any recurrence
of arbitrary (but independent of $n$) order $r\geq 1$ and coefficients in
$\K[x]$ into a polynomial $r\times r$ matrix recurrence of order 1, and the
$N$th term of the sequence, $u_N(x)$, can be computed as an element in the
$N$th power of an \(r \times r\) polynomial matrix multiplied by the polynomial
vector of initial values.  Conversely, computing the $N$th power of any
polynomial matrix can be reduced to computing terms in polynomial C-finite
sequences (see the introduction of \cref{sec:bivmodpow_polmatpow}).  Binary
powering allows to solve both problems in $O(\timepm{N})$
operations, and in $O(N^2 \log(N))$ bit operations if $\K=\mathbb{Q}$,
considering both the recurrence order (or the matrix size)~\(r\) and the
recurrence degree (or the matrix degree)~\(d\) as constant parameters, i.e.,
\(r, d \in \bigO{1}\). The main question addressed in this article is:

\smallskip
\hfill \emph{
Can one achieve a better complexity for these tasks?}
\hfill { }
\smallskip

As far as scalar C-finite sequences are concerned, the algebraic
complexity $O(\log(N))$ seems very difficult (if not impossible) to beat,
but it is perhaps not impossible to improve the bit complexity $O(N
\log(N))$ towards $O(N)$. While we do not achieve this, our results
provide polynomial analogues for this type of improvement. As frequently
noticed in computer algebra, polynomials are ``computationally easier'' to
deal with than integers. In our case, philosophically, this comes from the
fact that we can benefit from an additional operation on polynomials:
differentiation.
This possibly cryptic remark will hopefully become clear throughout
\cref{sec:fib_pol}. There, using Fibonacci polynomials as
a test bench, we argue why it is indeed legitimate to hope for algorithms of
complexity $O(N)$ for computing the $N$th term of a polynomial C-finite
sequence.

\myparagraph{Main result}
Recall that a \emph{C-finite sequence} is a sequence $(u_n)_{n\geq0}$ of
elements \(u_n\) in some ring~$R$ which satisfies a recurrence equation
\begin{equation}
  \label{eq:def_polCrec}
  u_{n+r} = c_{r-1} u_{n+r-1} + \cdots + c_0 u_{n} \quad \text{for all} \; n\geq 0,
\end{equation}
for $c_{0},\dots,c_{r-1} \in R$. In this work we consider \emph{polynomial} C-finite sequences,
i.e., the case $R = \K[x]$ for some (effective) field~$\K$ of characteristic zero;
thus $u_n = u_n(x) \in \K[x]$. The customary data structure for representing such a sequence consists of the
polynomials \(c_0(x),\dots,c_{r-1}(x)\) defining the recurrence and the $r$ initial conditions
$u_0(x),\dots,u_{r-1}(x) \in \K[x]$. The \emph{order} of the recurrence~\eqref{eq:def_polCrec} is \(r\) while
its \emph{degree} is the maximum of the degrees of the \(c_i\)'s.

\begin{theorem}
  \label{thm:main}
  Let $\K$ be an effective field of characteristic~0.
  Let \(d\) and \(r\) be fixed positive integers. For each of the following
  problems, there exists an algorithm solving it in
  $O(N)$ operations $(\pm, \times, \div)$ in~$\K$:
  {\renewcommand\labelitemi{}
  \begin{itemize}
    \setlength{\itemindent}{-0.6cm}
    \item \pbseqterm: Given a polynomial C-finite sequence $(u_n(x))_{n \geq 0}$ of order and degree at most $r$ and $d$, compute the $N$th term $u_N(x)$.
    \item \pbbivmodpow: Given polynomials $Q(x,y)$ and $P(x,y)$ in $\K[x,y]$ of degrees in \(y\) and \(x\) at most $r$ and $d$, with $P(x,y)$ monic in $y$, compute $Q(x,y)^N \bmod P(x,y)$.
    \item \pbpolmatpow: Given a square polynomial matrix $M(x)$ over $\K[x]$ of size and degree at most $r$ and $d$, compute $M(x)^N$.
  \end{itemize}
  }
\end{theorem}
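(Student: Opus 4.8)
The plan is to reduce \pbpolmatpow\ and \pbbivmodpow\ to \pbseqterm, and then to solve \pbseqterm\ by establishing that the single polynomial $u_N(x)$ satisfies a linear differential equation in~$x$ whose order and degrees are bounded independently of~$N$; such an ODE converts into a bounded-order recurrence on the coefficients of $u_N(x)$, which can be unrolled in $\bigO{N}$ arithmetic steps. For the reductions: in \pbpolmatpow, the Cayley--Hamilton theorem shows that the matrix sequence $(M(x)^n)_{n\ge0}$ obeys the order-$r$ linear recurrence with constant-in-$n$ coefficients read off the characteristic polynomial of $M(x)$; those coefficients lie in \pring\ and have degree $\le rd$, so each of the $r^2$ entries of $M(x)^N$ is the $N$th term of a polynomial C-finite sequence of order and degree $\bigO{1}$, i.e.\ an instance of \pbseqterm. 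In \pbbivmodpow, since $P(x,y)$ is monic in~$y$ of degree $\le r$, the quotient $\bivring/(P)$ is free over \pring\ with basis $1,y,\dots,y^{r-1}$, and multiplication by $Q(x,y)$ is given on this basis by a matrix $M_Q\in\pmatspace{r}{r}$ of $x$-degree $\bigO{rd}$; then $Q(x,y)^N\bmod P(x,y)$ is read off a single column of $M_Q^N$, so this reduces to \pbpolmatpow\ and hence to \pbseqterm. It thus suffices to treat \pbseqterm.

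Encode the recurrence in the generating function $F(x,t)=\sum_{n\ge0}u_n(x)t^n=A(x,t)/B(x,t)$, where $B(x,t)=1-c_{r-1}(x)t-\cdots-c_0(x)t^r$ and $A\in\K[x,t]$ is fixed by the initial data; both $A,B$ have $t$-degree $\le r$ and $x$-degrees $\bigO{d}$, and $u_N(x)=[t^N]F(x,t)$. The key claim — which I expect to be the main obstacle — is that there exist integers $s,\delta,\nu$ bounded in terms of $r,d$ only, and polynomials $p_0,\dots,p_s\in\K[N,x]$, not all zero, with $\deg_x p_k\le\delta$ and $\deg_N p_k\le\nu$, such that
\[
  p_s(N,x)\,u_N^{(s)}(x)+\cdots+p_1(N,x)\,u_N'(x)+p_0(N,x)\,u_N(x)=0,
\]
the $p_k$ being computable from $A,B$ at a cost not depending on~$N$. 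I would prove this by factoring $B(x,t)=\prod_i(1-\alpha_i(x)t)$ over $\closure{\ffield}$: a partial-fraction decomposition writes $u_N(x)$ as an $N$-\emph{independent} $\closure{\ffield}$-linear combination of the terms $N^j\alpha_i(x)^N$; each $\alpha_i$ is algebraic over $\ffield$ of degree $\le r$, so $\alpha_i'/\alpha_i$ and all its $x$-derivatives lie in the degree-$\le r$ field $\ffield(\alpha_i)$, which forces $\alpha_i(x)^N$ to satisfy an ODE in~$x$ of order $\le r$ with coefficients polynomial in~$N$ and~$x$ of bounded degree; closure of such ``$N$-uniformly D-finite'' functions under sums, plus a Galois-descent argument bringing the coefficients back into $\K[N,x]$, yields the claim. (The same ODE also arises by creative telescoping of $F(x,t)\,t^{-N-1}$ with respect to~$t$; in either approach the real work is the \emph{uniform} bound on orders and degrees, not the mere existence of an ODE.)

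Granting this, the algorithm for \pbseqterm\ runs as follows. Compute the ODE above ($\bigO{1}$ cost, with~$N$ a symbolic parameter), and extract the coefficient of $x^m$ to turn it into a recurrence of order $s+\delta=\bigO{1}$ for $a_m:=[x^m]u_N(x)$, with coefficients polynomial in~$m$ and~$N$ of bounded degree; since $\deg u_N=\bigO{N}$ (immediate from the defining recurrence), $u_N$ has $\bigO{N}$ coefficients. Seed this recurrence by computing $a_0,\dots,a_{C-1}$ for a suitable constant~$C$: modulo $x^C$ the series $F$ is rational over $\pring/(x^C)$, so $u_N(x)\bmod x^C=[t^N](F\bmod x^C)$ is obtained by binary powering the order-$\le r$ companion matrix over $\pring/(x^C)$ in $\bigO{\log N}$ operations in~$\K$. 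Then unroll the recurrence for the $a_m$ up to index $\bigO{N}$, each step costing $\bigO{1}$ operations $(\pm,\times,\div)$. The remaining wrinkle is that the leading coefficient of this recurrence — a polynomial in~$N$ times a polynomial in~$m$ — can vanish at $\bigO{1}$ indices~$m$; at each such~$m_0$ one recovers $u_N$ modulo a slightly higher power of~$x$ by the same binary-powering device at cost $\bigO{\log N}$ and restarts (and should the $N$-factor vanish for the specific input~$N$, one uses a slightly larger operator furnished by the key claim). Altogether this is $\bigO{N}+\bigO{\log N}=\bigO{N}$ operations in~$\K$; together with the reductions this proves the three statements, the hypothesis of characteristic~$0$ being used to guarantee that the integer factors ($N$,~$m$, and the falling factorials arising from differentiation) occurring in these recurrences are not spuriously zero.
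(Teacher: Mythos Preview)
Your overall plan coincides with the paper's: establish that $u_N(x)$ satisfies a linear ODE in~$x$ of order and degrees bounded independently of~$N$ (your sketch via the closed form $\sum_i q_i(N,x)\,\alpha_i(x)^N$ with $\alpha_i$ algebraic is exactly the paper's proof of its \cref{thm:Ln}), convert to a bounded-order recurrence on the coefficients $a_m=[x^m]u_N(x)$, seed by binary powering the companion matrix modulo a fixed power of~$x$, and unroll. Your reductions are also correct, though they run in the opposite direction from the paper's: you send $\pbbivmodpow\to\pbpolmatpow$ via the multiplication-by-$Q$ matrix on $\pring[y]/(P)$, whereas the paper sends $\pbpolmatpow\to\pbbivmodpow$ via Cayley--Hamilton and then $\pbbivmodpow\to\pbseqterm$ directly via a reversal/resultant trick. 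For the $\bigO{N}$ statement with $r,d\in\bigO{1}$ either route is fine.

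The genuine gap is your handling of the singular case. You assert that the leading coefficient of the coefficient recurrence is ``a polynomial in~$N$ times a polynomial in~$m$'', so that its zeros in~$m$ are $N$-independent and can be jumped over by computing $u_N$ modulo a ``slightly higher'' power of~$x$. This factorisation is false in general: the leading coefficient is a polynomial in both variables jointly, and its roots in~$m$ typically \emph{move with~$N$}. The paper's own example is $u_n(x)=2^n+x^n+x^{2n}$, for which the ODE yields the zeroth-order relation
\[
  (2N-m)(N-m)\,m\,a_m=0,
\]
so the problematic indices are $m\in\{0,N,2N\}$. In this situation your proposed fix --- recover $u_N\bmod x^{m_0+C}$ by binary powering the companion matrix modulo $x^{m_0+C}$ --- costs $\Theta(m_0)$ per squaring and hence $\Theta(N\log N)$ when $m_0\in\Theta(N)$, which is precisely the logarithmic factor you are trying to remove.

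The paper's remedy does not assume any such factorisation. Pick $c\in\K$ at which the leading coefficient of the ODE does not vanish; the shifted polynomial $v_N(x)=u_N(x+c)$ then satisfies the shifted ODE, whose coefficient recurrence has leading term of the shape $(\text{nonzero constant})\cdot(m+1)\cdots(m+\ell)$ with only negative integer roots in~$m$, so $v_N$ can be unrolled in $\bigO{N}$ without obstruction. The finitely many values $a_{m_0}$ at the singular indices of the original recurrence are then recovered in $\bigO{N}$ each from
\[
  a_{m_0}=\sum_{k\ge m_0}(-c)^{k-m_0}\binom{k}{m_0}\,[x^k]v_N(x),
\]
a finite sum since $\deg v_N\in\bigO{N}$; with these in hand one unrolls the original recurrence through its singular points. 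Plugging this repair into your outline gives a correct $\bigO{N}$ algorithm.
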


Our algorithms for these problems make essential use of divisions in~$\K$. We
do not know if the complexity $O(N)$ can be achieved using only the operations
$(+, -, \times)$ in~$\K$.

\myparagraph{Previous work}
As already mentioned, the classical way of computing the $N$th term of a given
C-finite sequence uses binary powering of the companion matrix, see
e.g.~\cite{MiBr66}. Fiduccia's algorithm \cite{Fiduccia85} utilizes binary powering
in a polynomial quotient ring and improves the complexity with respect to $r$
(but not with respect to~$N$). The fastest known algorithm~\cite{BoMo21} beats
Fiduccia's by a constant factor. In the polynomial C-finite case and assuming
\(r, d \in \bigO{1}\), all these algorithms have a complexity in
$O(\timepm{N})$.

Beyond this classical approach, the previous work on the aforementioned
problems consists of two distinct directions. The special case of
Chebyshev polynomials of the second kind
$U_n(x) = (-i)^{n} F_{n+1}(2ix)$ (with $F_n(x)$ the
$n$th Fibonacci polynomial and $i$ the imaginary unit) was considered
in~\cite{Koepf99} (and later in \cite{Czirbusz12}). These references present various methods for the computation of
the Chebyshev polynomials (of the first and second
kind)
% $T_N(x)$ and $U_N(x)$,
with complexity ranging from $O(N)$ to $O(N^3)$.
% In a different work~\cite{Koepf07} Koepf considered the somewhat more
% general class of orthogonal polynomials. Importantly for our context, the
% author heavily exploits the structure of these classical polynomials and
% his methods do not admit obvious generalizations.
The results in~\cite{Koepf99,Czirbusz12} exploit the particular structure
of these polynomials; except for possibly other families of classical
orthogonal polynomials,
% (Legendre, Hermite, Laguerre, Jacobi, Gegenbauer, etc.),
for which explicit (hypergeometric) formulas exist, the methods
in~\cite{Koepf99,Czirbusz12} do not admit obvious generalizations.

An idea closely connected to a fundamental building block of our
algorithms is explained in~\cite[Pbm.\,4]{FlSa97}.
%
% There, Flajolet and Salvy exploit the fact that the coefficient sequence of
% the $N$th power of a given polynomial $P(x)\in\K[x]$ satisfies a small
% recurrence, that allows to compute $P(x)^N$ more efficiently than by binary
% powering.
%
There, Flajolet and Salvy exploit the fact that, given a polynomial
$P(x)$ in $\K[x]$, the coefficient sequence of the $n$th power
$P(x)^n$ satisfies a linear recurrence of order independent of $n$,
and with
coefficients in $\K[x,n]$ of degree independent of~$n$;
this recurrence
allows them to compute (a selected coefficient of)
$P(x)^N$ more efficiently than by binary powering.
This idea has been applied in~\cite[\S8]{BoGaSc07}
to count points on
hyperelliptic curves over finite fields,
with applications to cryptography.
The technique also yields a general solution to
\pbseqterm{} when $r=1$.

\myparagraph{Outline}
The following observation generalizes that in~\cite{FlSa97}: the coefficient
sequence of the $n$th power of any \emph{algebraic function} satisfies a
recurrence of order and degree independent of~$n$. From this, in
\cref{sec:seq_term}, we give algorithms for \pbseqterm{} with cost
\(\bigO{N}\).

To complete the proof of \cref{thm:main}, we design reductions between the
three problems. Obviously $\pbpolmatpow \Rightarrow \pbseqterm$, i.e., any
algorithm for $\pbpolmatpow$ with cost $O(N)$ induces one for $\pbseqterm$ with
cost $O(N)$ as well. Indeed, the $N$th term of a polynomial C-finite
sequence is equal to an entry of the product of the vector of initial values
and the \(N\)th power of a
companion matrix, and this polynomial
vector-matrix multiplication costs $O(N)$. Conversely, it also holds that
$\pbseqterm \Rightarrow \pbpolmatpow$. One natural way to see this is to
consider \(r^2\) sequences corresponding to each entry of \(M(x)^n\), with
recurrence given by the characteristic polynomial of \(M(x)\); see the
introduction of \cref{sec:bivmodpow_polmatpow}. In
\cref{sec:modpow_to_matpow}, we give a more efficient algorithm for this
reduction, based on an algorithm for $\pbseqterm \Rightarrow \pbbivmodpow$
described in \cref{sec:seqterm_to_modpow}.

\myparagraph{Basics of complexity and D-finite functions}
Hereafter, $\K$ denotes an effective field of characteristic zero. We analyze
the performance of algorithms in the algebraic complexity model, meaning that
arithmetic operations $(\pm, \times, \div)$ in the base field $\K$ are counted
at unit cost. As before, $\timepm{N}$ stands for the complexity of multiplying
two polynomials in $\K[x]$ of degree at most $N$. With FFT-based multiplication
$\timepm{N} \in O(N \log(N) \log\log(N))$ \cite{CaKa91}, improved to $O(N
\log(N))$ if $\K$ contains suitable roots of unity~\cite{CoTu65} or if $\K$ 
is a finite
field \cite{HaHo22}.  A power series $f(x) \in \K\ps{x}$ is said to be
\emph{D-finite} if it satisfies a linear differential equation (LDE) of the
form
\begin{equation} \label{eq:dfin}
q_\ell(x) f^{(\ell)}(x) + \cdots + q_0(x) f(x) = 0,
\end{equation}
for some $q_0(x),\dots,q_\ell(x) \in \K[x]$ with $q_\ell(x) \neq 0$. Equivalently, writing $f(x) = \sum_{k \geq 0} f_k x^k$, the sequence $(f_k)_{k \geq 0}$ is \emph{P-finite} (or, \emph{P-recursive}), i.e., it satisfies a linear recurrence equation (LRE)
\[
p_s(k) f_{k+s}  + \cdots + p_0(k) f_k = 0 \quad \text{for all} \; k \geq 0,
\]
with polynomial coefficients $p_0(x),\dots,p_s(x) \in \K[x]$, and $p_s \neq 0$.
Note that $s$ and $\ell$ may differ in general, but $s \leq \ell + \max_i (\deg q_i(x))$. It also holds that $\max_i(\deg p_i(x)) \leq \ell$.

It is often useful to write (\ref{eq:dfin}) as $Lf(x)=0$, where
\[
L = q_\ell(x) \partial_x^\ell + \cdots + q_0(x)
\]
is an element in the noncommutative Weyl algebra $\K[x]\langle \partial_x
\rangle$ of linear differential operators with
multiplication governed by the Leibniz rule
$\partial_x x = x \partial_x +1$.
The \emph{order} $\ell$ of the differential operator~$L$
is the highest power of $\partial_x$ occurring in~$L$,
and the \emph{degree}
of $L$ is the highest power of~$x$ occurring in~$L$. 
We recall that a \emph{least common left
multiple} (LCLM) of two differential operators $L_1, L_2 \in \K[x]\langle \partial_x
\rangle$ is a differential operator $L \in \K[x]\langle \partial_x
\rangle$ 
of minimal order such that 
there exist  $A,B \in
\K(x)\langle \partial_x \rangle$ with $L = AL_1 = BL_2$.
% all solutions of $L_1$ and $L_2$ are solutions of $L$ as well.
LCLMs can be computed efficiently~\cite{BoChLiSa12}.

\section{The case of Fibonacci polynomials} \label{sec:fib_pol}

Before solving the first part (\pbseqterm) of \cref{thm:main} in general, we propose in this
section three different approaches that can be used to compute the $N$th Fibonacci polynomial
$F_N(x)$ in complexity $O(N)$. Two of these methods have the advantage that they
generalize to the case of arbitrary C-finite sequences.

The starting point of all that follows is the observation that the generating function $F(x,y)
\coloneqq \sum_{n \geq 0} F_n(x) y^n \in \K[x]\ps{y}$ of the sequence $(F_n(x))_{n\geq 0}$ is
\emph{rational}, and equal to $y/(1-xy-y^2)$. 
% The denominator encodes the recurrence satisfied by
% the sequence, while the numerator encodes the initial terms of the sequence.
% This is a consequence of a very general fact (recall duality lemma, or give a pointer).

\subsection{First method via a closed-form expression}
\label{sec:fibclosedform}

By using the partial fraction decomposition
\[
\frac{y}{1-xy-y^2} = \frac{1}{\varphi_+(x) - \varphi_-(x)} \cdot \left(  \frac{1}{1 - \varphi_+(x)y} - \frac{1}{1 - \varphi_-(x)y}  \right)
\]
where $\varphi_{\pm}(x)={(x \pm {\sqrt  {x^{2}+4}})}/{2}$ are the roots of
$\varphi^2 - x\varphi - 1=0$,
and by applying the geometric series, we get the closed-form expression
% for all $n\geq 0$ we have
\begin{equation}\label{eq:Fn-start}
F_n(x) = \frac{\varphi_+(x)^n - \varphi_-(x)^n}{\varphi_+(x) - \varphi_-(x)} \quad \text{for all } n\geq 0.
\end{equation}
Now, using the binomial formula twice, we obtain the formula
% \begin{equation}\label{eq:Fn-1}
% F_n(x) = \frac{1}{2^{n-1}} \cdot \sum_{k \geq 0} \binom{n}{2k+1} x^{n-2k-1} (x^2+4)^k.
% \end{equation}
% A second binomial expansion yields the alternative formula
\begin{equation}\label{eq:Fn-2}
F_n(x) = \frac{1}{2^{n-1}} \cdot \sum_{\ell \geq 0}   4^\ell
\left(\sum_{k \geq 0}  \binom{n}{2k+1} \binom{k}{\ell} \right) x^{n-2\ell-1}.
\end{equation}
The identity~\cite[3.121]{Gould72} implies a ``magic'' simplification:
\begin{equation}\label{eq:hyperg}
\sum_{k \geq 0}  \binom{n}{2k+1} \binom{k}{\ell} = 2^{n-1-2\ell} \binom{n-\ell-1}{\ell}.
\end{equation}
In conclusion,  from~(\ref{eq:Fn-2}) and (\ref{eq:hyperg}) it follows that
\begin{equation}\label{eq:Fn-final}
F_n(x) =  \sum_{\ell \geq 0}  \binom{n-\ell-1}{\ell}  x^{n-2\ell-1}
% =  \sum_{k=0}^{n-1}  \binom{\frac{n+k-1}{2}}{\frac{n-k-1}{2}}  x^{k}
.
\end{equation}
With this expression at hand, it becomes transparent that one can compute $F_N(x)$ efficiently. Indeed, by writing $F_N(x) = \sum_{k = 0}^{N-1} f_k x^k$, it follows from (\ref{eq:Fn-final}) that $\seq{f}$ satisfies the recurrence relation
\begin{equation}\label{eq-recFn}
f_{k+2} = \frac{(N + k + 1)(N - k - 1)}{4(k+1)(k+2)} f_k \quad \text{ for all } k \geq 0.
\end{equation}
Moreover, (\ref{eq:Fn-final}) also gives $(f_0,f_1) = (1,0)$ for odd $N$ and otherwise $(f_0,f_1) =
(0,N/2)$. With these initial conditions, it is now clear that
$F_N(x)$ can be computed in $O(N)$ by unrolling the recurrence~\eqref{eq-recFn}.

As mentioned in the introduction, the analogue of
formula~\eqref{eq:Fn-final} for the case of Chebyshev polynomials of the
first kind $T_n(x)$ was already exploited in~\cite[\S1.9]{Koepf99}. The
disadvantage of this approach is that for general polynomial C-finite
sequences there is no hope for a closed-form expression like
(\ref{eq:Fn-final}).

\subsection{Second method via algebraic substitution} \label{sec:fibalgsubs}

There is another method for computing $F_N(x)$ in $O(N)$, which has the advantage that it generalizes to any C-finite sequence, as we will show in \cref{sec:method_algeqtodiffeq}.
The crucial remark (\cref{lem:poweralgeqdiffeq}) is that since
$\varphi_{\pm}(x)$ is algebraic, $\varphi_{\pm}(x)^n$ satisfies a ``small'' LDE, of order and degree independent of $n$. The same holds for
$1/(\varphi_{+}(x)-\varphi_{-}(x))$, therefore
for $F_n(x)$ as well. More precisely,
$\varphi_{\pm}(x)^n$ satisfies the LDE
\[
  (x^2+4) y''(x) + x y'(x) - n^2 y(x) = 0,
\]
and $1/(\varphi_{+}(x)-\varphi_{-}(x)) = (x^{2}+4)^{-1/2}$ satisfies the LDE
\[
  (x^2+4) y'(x) + x y(x) = 0.
\]
Using~\eqref{eq:Fn-start}, it then follows that
% for every $n \geq 0$
the polynomial $F_n(x)$ satisfies
\begin{equation} \label{eq:diffeqfib}
  (x^2+4) y''(x) + 3x y'(x) + (1-n^2) y(x) = 0.
\end{equation}
Writing $F_N(x) = \sum_{k = 0}^{N-1} f_k x^k$, plugging into
(\ref{eq:diffeqfib}) for $n=N$ and extracting the $(k+2)$nd coefficient,
it now follows that the sequence $(f_k)_{k\geq0}$ satisfies
recurrence~\eqref{eq-recFn}.
The initial conditions $f_0, f_1$ are given by $F_N(x) \bmod x^2$ which
can be found in complexity $O(\log(N))$ by computing the $N$th
power of the companion matrix~(\ref{eq:fibCmatrix}) in $\K[x]/(x^2)$ by
binary powering and reducing mod $x^2$ in each step. As before, unrolling
recurrence \eqref{eq-recFn} with these initial terms provides a way to
compute $F_N(x)$ in complexity $O(N)$.

\subsection{Third method via Creative Telescoping}\label{sec:CTfib}

Writing $F(x,y) = y/(1-xy-y^2)$ we are interested in a differential equation for the coefficient of $y^N$ in $F(x,y)$. 
By Cauchy's integral formula, we have for sufficiently small $\epsilon>0$:
\[
  F_N(x) = [y^N] F(x,y) = \frac{1}{2\pi i} \oint_{|y|=\epsilon} \frac{y}{(1-xy-y^2)y^{N+1}} \mathrm{d}y.
\]
Then the method of creative telescoping~\cite{AlZe90} can be used to find an LDE for the integral above. For example, the command
\begin{verbatim}
    DEtools[Zeilberger](1/(1-x*y-y^2)/y^n, x, y, Dx);
\end{verbatim}
in Maple immediately finds that
\begin{align*}
  \left((x^2+4) \partial_x^2+3x\partial_x + 1 - n^2\right) \frac{F(x,y)}{y^{n+1}} = \partial_y \left(\frac{F(x,y)}{y^{n}} C(x,y) \right),
\end{align*}
where $C(x,y) = (n+1 - nxy - (n -1) y^{2})/(1-xy-y^2)$. By Cauchy's integral theorem, the contour integral of the right-hand side vanishes, and (\ref{eq:diffeqfib}) follows.
Then one can conclude in the same way as in the previous method and compute $F_N(x)$ in complexity $O(N)$.

\subsection{Comments on the three approaches} It is natural to ask
ourselves what in these approaches was just luck, what was truly specific
to the particular example of the Fibonacci polynomials, and what can be
extended to the general case.

It is clear that the key for computing $F_N(x)$ in complexity $O(N)$ is
the existence of the recurrence (\ref{eq-recFn}) (or equivalently 
the LDE~\eqref{eq:diffeqfib}). Even though there is no hope for a closed-form
solution in general, we shall prove that such a recurrence always exists
for polynomial C-finite sequences. We should, however, definitely be
careful and avoid proving tautologic statements. Since $u_N(x)$ is a
polynomial, it does satisfy the first-order LDE $u_N(x) y'(x) -
u_N'(x) y(x)=0$, but this one is trivial for our purposes. Indeed,
converting this differential equation into a recurrence satisfied by the
sequence of coefficients of $u_N(x)$ yields a recurrence of order $\deg(u_N)$,
which is obviously useless for computing the coefficients of $u_N$. Rather, we
would like to find an LRE/LDE whose order and degree are
independent of $N$. This is the purpose of the next section. Specifically,
in \S\ref{sec:method_algeqtodiffeq} we explain how it can be computed by
algebraic substitution (generalizing~\S\ref{sec:fibalgsubs}) and in
\S\ref{sec:CT} we show that it can also be found via creative telescoping (generalizing \S\ref{sec:CTfib}).

\section{Polynomial C-finite sequences}
\label{sec:seq_term}

Recall that a polynomial C-finite sequence $(u_n(x))_{n\geq0}$ is a sequence of polynomials $u_n(x) \in \K[x]$ that satisfies a recurrence
\begin{equation} \label{eq:def_polCrec2}
   u_{n+r}(x) = c_{r-1}(x) u_{n+r-1}(x) + \cdots + c_0(x) u_{n}(x) ,
\end{equation}
of some order $r \in \NN$, with coefficients $c_0(x),\dots,c_{r-1}(x) \in \K[x]$. The \emph{degree} of
(\ref{eq:def_polCrec2}) is $d = \max_i (\deg c_i(x))$. The sequence $(u_n(x))_{n\geq0}$ is defined uniquely by \eqref{eq:def_polCrec2} if $r$ initial terms $u_0(x),\dots,u_{r-1}(x)$ are
prescribed. The characteristic polynomial of (\ref{eq:def_polCrec2}) is defined as
\[
\chi(y) =
% c_r(x)
y^r  -  c_{r-1}(x)y^{r-1}  - \cdots -  c_{1}(x) y - c_0(x) \in \K[x,y].
\]
The generating function $U(x,y) \coloneqq \sum_{n \geq 0} u_n(x)y^n$ is rational:
\begin{equation} \label{eq:genfunc}
U(x,y) = \frac{v_0(x) + \cdots + v_{r-1}(x)y^{r-1}}{y^r \chi(1/y)},
\end{equation}
with $v_k(x) \coloneqq
% c_r(x)
u_k(x) - c_{r-1}(x) u_{k-1}(x) - \cdots - c_{r-k}(x)u_0(x)$.

Let $a_1(x),\dots,a_k(x) \in \closure{\ffield}$ be the roots of $\chi(y)$, and $m_1,\dots,m_k$ be their multiplicities.
By partial fraction decomposition and geometric series, any sequence $(u_n(x))_{n\geq0}$ satisfying (\ref{eq:def_polCrec2}) has the form
\begin{equation} \label{eq:polCrecSol}
u_n(x) = q_1(n,x) a_1(x)^n + \cdots + q_k(n,x) a_k(x)^n,
\end{equation}
where $k \leq r$ and each $q_i(n,x) \in \field(a_1(x),\dots,a_n(x))[n]$ is a polynomial in $n$ of degree at most $m_i-1$, for $i = 1,\dots,k$.

% The main goal of this section is to present an algorithm which computes the
% $N$th term of a polynomial C-finite sequence (with fixed $r,d \in O(1)$) in
% complexity $O(N)$. Note that this is optimal with respect to the output size,
% since $\deg u_N(x) = O(Nd) = O(N)$.

\subsection{Computing \texorpdfstring{$u_N(x)$}{u\_n(x)} in \texorpdfstring{$O(N)$}{O(N)}}
\label{sec:method_algeqtodiffeq}

By generalizing the ideas of \cref{sec:fibalgsubs}, it is not difficult to prove that the $n$th term of a polynomial C-finite sequence $(u_n(x))_{n\geq0}$ satisfies an LDE whose order and degree are independent of $n$, and consequently, that there exists a linear recurrence relation for the coefficient sequence of $u_n(x)$ whose order (say $s$) and degree are again independent of $n$. Then, for a given $N \in \NN$, first computing initial terms by binary powering of the companion matrix in $\K[x]/(x^s)$ and then unrolling this recurrence for $n= N$, we achieve a complexity~$O(N)$ for the computation of $u_N(x)$.

\begin{theorem} \label{thm:Ln}
Let $(u_n(x))_{n\geq0}$ be a polynomial C-finite sequence. Then there exists $L_n \in \K[n,x]\langle \partial_x \rangle$
with order and degree independent of $n$,
and
such that $L_n (u_n(x)) = 0$. Consequently,
writing $u_n(x) = \sum_{k \geq 0} c_{n,k} x^k$,
there exist, for some $s \in \NN$ independent of $n$,
polynomials $p_0(n,x),\dots,p_s(n,x) \in \K[n,x]$
of degrees independent of~$n$,
and
such that the sequence $(c_{n,k})_{k \geq 0}$ satisfies the recurrence
\begin{equation} \label{eq:recuN}
p_s(n,k) c_{n,k+s} + \cdots + p_0(n,k) c_{n,k} = 0, \quad k\geq0.
\end{equation}
\end{theorem}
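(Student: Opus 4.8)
The plan is to generalize the algebraic-substitution argument of \cref{sec:fibalgsubs}. First, observe that the second assertion follows from the first through the standard dictionary between linear ODEs and linear recurrences: if $L_n = \sum_{j=0}^{\ell} q_j(n,x)\partial_x^j \in \K[n,x]\langle\partial_x\rangle$ annihilates $u_n(x)$ with order $\ell$ and degree $\delta \coloneqq \max_j \deg_x q_j$ both independent of~$n$, then substituting $u_n(x)=\sum_{k\ge 0}c_{n,k}x^k$ into $L_n(u_n)=0$ and extracting the coefficient of each power of~$x$ yields, after a shift of the index, a recurrence of the announced shape with $s \le \ell+\delta$ and $p_0,\dots,p_s \in \K[n,x]$ of degree at most $\ell$ in~$k$; all of this is independent of~$n$, since the only $k$-dependence is through the falling factorials produced by the $\partial_x^j$, and the only $n$-dependence is already carried polynomially by the $q_j$. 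So the task reduces to producing such an operator $L_n$.

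For that I would start from the closed form \eqref{eq:polCrecSol}, namely $u_n(x) = \sum_{i=1}^k q_i(n,x)\, a_i(x)^n$, where the $a_i \in \closure{\K(x)}$ are the roots of $\chi(y)\in\K[x,y]$ --- algebraic over $\K(x)$ of degree at most~$r$ --- and each $q_i$ is a polynomial in~$n$ of bounded degree with coefficients in the splitting field $\Sigma \coloneqq \K(x,a_1,\dots,a_k)$, a finite extension of $\K(x)$ of some degree~$D$. Since each $a_i$ is algebraic, its logarithmic derivative $a_i'/a_i$ again lies in $\Sigma$ (this is why each $a_i^n$ alone satisfies a small ODE --- the content of \cref{lem:poweralgeqdiffeq}). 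Writing $\Sigma_n \coloneqq \K(n)(x,a_1,\dots,a_k)$, a differential field of dimension~$D$ over $\K(n)(x)$, the identity $(\sigma\, a_i^n)' = (\sigma' + n\,\sigma\, a_i'/a_i)\, a_i^n$ shows that every $\K(n)(x)$-subspace $\Sigma_n\, a_i^n$ is stable under $\partial_x$, hence so is $W \coloneqq \sum_{i=1}^k \Sigma_n\, a_i^n$, a $\K(n)(x)$-vector space of dimension at most $kD$; and $u_n \in W$ by the closed form. Therefore $u_n, u_n', \dots, u_n^{(kD)}$ are linearly dependent over $\K(n)(x)$, giving an operator $L_n \in \K(n)(x)\langle\partial_x\rangle$ of order at most $kD$ with $L_n(u_n)=0$; clearing denominators places it in $\K[n,x]\langle\partial_x\rangle$, with order independent of~$n$.

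The remaining, and genuinely delicate, point is that the degree of $L_n$ in~$x$ can also be bounded independently of~$n$. For this I would expand each $u_n^{(j)}$ in a fixed, $n$-free $\K(x)$-basis of $W$ of the form $\{e_p\, a_i^n\}$, with $\{e_p\}$ a $\K(x)$-basis of $\Sigma$: iterating $\partial_x$ on $u_n = \sum_i q_i(n,x)\,a_i^n$ only ever recombines the fixed elements $a_i'/a_i$, the coefficients of the $q_i$, and their (boundedly many) derivatives --- all of bounded complexity in $\Sigma$ --- with coefficients polynomial in~$n$, so the coordinates of $u_n^{(j)}$ are rational functions of~$x$ whose numerator and denominator degrees stay bounded uniformly in~$n$, only their degrees in~$n$ growing. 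Solving the resulting linear system by Cramer's rule and clearing denominators then bounds $\deg_x L_n$ independently of~$n$. (One harmless wrinkle: clearing $n$-denominators may spoil $L_n(u_n)=0$ for the finitely many~$n$ that are roots of those denominators; these can be handled separately, or simply ignored in the algorithmic corollary, where such~$N$ are $\bigO{1}$.) I expect this degree bookkeeping to be the main obstacle --- everything else is a dimension count or a closure property of D-finiteness. An alternative, generalizing \cref{sec:CTfib}, is to run creative telescoping on the integrand $U(x,y)\,y^{-n-1}$, with $U(x,y)=\sum_{n\ge 0}u_n(x)y^n$ the rational function \eqref{eq:genfunc}: this produces $L_n \in \K[n,x]\langle\partial_x\rangle$ and a certificate $R$ with $L_n\!\bigl(U(x,y)y^{-n-1}\bigr) = \partial_y R$; integrating over a small circle in~$y$ annihilates the right-hand side and gives $L_n(u_n)=0$, while the standard order and degree bounds for creative telescoping of rational functions depend only on the degrees of the numerator and denominator of~$U$, with~$n$ entering the telescoper polynomially --- hence independently of~$n$.
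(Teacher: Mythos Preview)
Your proposal is correct and follows essentially the same route as the paper: both start from the closed form \eqref{eq:polCrecSol} and exploit that each $a_i(x)^n$ satisfies a small ODE (the content of \cref{lem:poweralgeqdiffeq}), then combine. The paper packages this modularly --- an operator for each $a_i(x)^n$, closure under multiplication by the algebraic $q_i(n,x)$, then an LCLM --- whereas you do it in one stroke via the finite-dimensional $\partial_x$-stable $\K(n)(x)$-space $W=\sum_i \Sigma_n\,a_i^n$; the two are equivalent, and your version has the virtue of making the $x$-degree bookkeeping (which the paper leaves implicit in \cref{lem:poweralgeqdiffeq} and its references) more explicit. Your closing remark on the creative-telescoping alternative is exactly what the paper develops next, in \cref{sec:CT}.
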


In the theorem above it is crucial that neither the order nor the degree of $L_n$ depend on $n$. Since each $u_n(x)$ is a polynomial, it is a tautology to say that it satisfies \emph{some} LDE: one may simply take $L = \partial_x^\alpha$, where $\alpha>\deg(u_n(x))$ or $L = u_n(x) \partial_x - u_n'(x)$. However, it is a nontrivial fact that $u_n(x)$ satisfies an LDE of the form
\[
p_{\ell}(n, x) u_n^{(\ell)}(x) + \dots + p_0(n, x) u_n(x) = 0
\]
for some $p_i(n,x) \in \K[n,x]$ with $\ell$ and $\deg_x p_i$ independent of $n$.

The most direct proof of \cref{thm:Ln} uses the explicit
expression~\eqref{eq:polCrecSol} for $u_n(x)$ and the following classical fact
about algebraic substitution into D-finite functions.  Recall that a function
$a(x)$ is called \emph{algebraic} over $\K(x)$ if it satisfies a nontrivial
polynomial relation $P(x,a(x))=0$ for some $P(x,y) \in \K[x,y]$. Size and
complexity bounds on differential equations for algebraic functions, and more
generally on algebraic substitution, are given in~\cite{BCLSS07,KaPo17}.

\begin{lemma} \label{lem:poweralgeqdiffeq}
    Let $a(x)$ be an algebraic function over $\K(x)$ and let $g(x)$ be D-finite. Then $f(x) = g(a(x))$ is D-finite. In particular, $a(x)^n$ satisfies an LDE of order and degree independent of $n$.
\end{lemma}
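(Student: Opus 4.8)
The plan is to establish the two assertions of \cref{lem:poweralgeqdiffeq} in sequence, the second being an immediate consequence of the first. For the main claim---that $g(a(x))$ is D-finite when $g$ is D-finite and $a$ is algebraic over $\K(x)$---I would argue via a vector-space-dimension argument over the rational function field. Let $K = \K(x)$ and let $E = K(a(x))$ be the finite algebraic extension obtained by adjoining the algebraic function $a(x)$; set $e = [E:K] < \infty$. The key structural fact is that $E$ is stable under the derivation $\partial_x$: since $a(x)$ satisfies $P(x,a(x)) = 0$, implicit differentiation gives $a'(x) = -P_x(x,a(x))/P_y(x,a(x)) \in E$, so $a' \in E$ and hence $\partial_x(E) \subseteq E$. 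I would then consider the $E$-vector space $V$ spanned by $f, \partial_x f, \partial_x^2 f, \dots$, where $f = g(a(x))$. Since $g$ is D-finite, say $g$ satisfies an ODE of order $\ell$ over $K$, the chain rule together with $\partial_x(E)\subseteq E$ shows that every derivative $\partial_x^j f$ lies in the $E$-module generated by $g(a(x)), g'(a(x)), \dots, g^{(\ell-1)}(a(x))$; concretely, $\partial_x^j f$ is an $E$-linear combination of the $g^{(i)}(a(x))$ for $i < \ell$, the coefficients being polynomials in the successive derivatives of $a(x)$, all of which lie in $E$. Hence $\dim_E V \le \ell$, so $f, \partial_x f, \dots, \partial_x^{\ell} f$ are $E$-linearly dependent. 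This produces a nonzero linear ODE for $f$ with coefficients in $E = \K(x)(a(x))$; multiplying by the common denominator and by the minimal polynomial data of $a(x)$ (equivalently, taking the norm / trace from $E$ down to $K$, or clearing denominators and eliminating $a(x)$ by resultants) yields a nonzero linear ODE for $f$ with coefficients in $\K[x]$. Thus $f$ is D-finite.

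For the ``in particular'' statement, I would simply apply the above with $g(z) = z^n$, which is certainly D-finite: it satisfies the first-order ODE $z g'(z) - n g(z) = 0$, i.e.\ $\ell = 1$. Then $f(x) = a(x)^n$ satisfies, by the argument above, an ODE of order at most $\ell = 1$ over $E = \K(x)(a(x))$---namely $a(x)^n$ is annihilated by $(a(x))\partial_x \cdot(\text{something}) - \dots$; more directly, $\partial_x(a(x)^n) = n\, a'(x)\, a(x)^{n-1}$ and $a'(x)/a(x) \in E$, so $a(x)\,\partial_x f - n a'(x) f = 0$ is an order-one ODE over $E$. Clearing the denominator, which is a fixed polynomial in $x$ depending only on $P$ (not on $n$), and then eliminating $a(x)$ via the minimal polynomial of $a(x)$ over $\K(x)$, gives an ODE for $a(x)^n$ whose order and degree are bounded solely in terms of $e = \deg_y P$ and $\deg_x P$; in particular they are independent of $n$, with $n$ appearing only polynomially (indeed linearly) in the coefficients. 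One can make this fully explicit: if $P(x,y) = \sum_{i} p_i(x) y^i$, then an ODE of order $e$ with coefficients of controlled degree, polynomial in $n$, can be read off by linear algebra over $\K(x)$ in the $e$-dimensional space with basis $1, a(x), \dots, a(x)^{e-1}$, exactly as worked out for $\varphi_\pm(x)^n$ in \cref{sec:fibalgsubs}.

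The main obstacle---or rather the only point requiring care---is the passage from an ODE with coefficients in the algebraic extension $E = \K(x)(a(x))$ to one with coefficients in $\K(x)$, and then tracking that the resulting order and degree are genuinely independent of $n$. The cleanest way to handle the descent is to observe that if $f$ is annihilated by an operator $L \in E\langle\partial_x\rangle$, then it is annihilated by $\prod_\sigma \sigma(L)$ where $\sigma$ ranges over the embeddings of $E$ into an algebraic closure fixing $K$ (a Galois/norm argument), and this product has coefficients in $K = \K(x)$; its order is at most $e$ times the order of $L$. Alternatively, and more constructively, one clears denominators in $L$ and uses the resultant in $y$ with $P(x,y)$ to eliminate $a(x)$, which again multiplies the order by at most $e$ and keeps the $x$-degree and the $n$-degree polynomially bounded. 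Since $g = z^n$ contributes $\ell = 1$ and the only $n$-dependence enters through the constant $n$ in $z g' - n g$, no blow-up in $n$ occurs: $n$ survives only as a polynomial parameter in the coefficients, precisely as asserted. One should also remark that the whole construction is effective, so that \cref{step:seqtermas:ODE} of \cref{alg:seqtermas} is realizable, with the cost and size bounds supplied by \cite{BCLSS07,KaPo17}.
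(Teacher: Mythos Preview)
Your approach is essentially the paper's: show that the $E$-span of the derivatives of $f=g(a(x))$ has dimension at most~$\ell$ over $E=\K(x)(a(x))$, then use $[E:\K(x)]=e<\infty$ to conclude D-finiteness over $\K(x)$; for the second claim, specialize to $g(z)=z^n$ with $zg'-ng=0$. The paper's proof is exactly this two-line sketch (citing Stanley for the first part).

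There is, however, a genuine imprecision in your descent step. Having obtained an operator $L\in E\langle\partial_x\rangle$ annihilating~$f$, you propose to pass to $\K(x)\langle\partial_x\rangle$ by ``taking the norm'', i.e.\ the product $\prod_\sigma \sigma(L)$ over the embeddings of~$E$. In the \emph{non-commutative} Weyl algebra this product depends on the order of the factors: placing $L$ rightmost makes the product annihilate~$f$, but then the product is \emph{not} invariant under the Galois action (which permutes the factors and hence changes the product), so it need not lie in $\K(x)\langle\partial_x\rangle$. The resultant variant has the same defect, since eliminating $a(x)$ from $\sum_i c_i(x,a)\,\partial_x^i f=0$ and $P(x,a)=0$ produces a relation that is polynomial, not linear, in the $\partial_x^i f$.

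The fix is already implicit in your own setup and is precisely what the paper does: from $\dim_E V\le\ell$ and $[E:\K(x)]=e$ one gets $\dim_{\K(x)}V\le e\ell$ directly, hence $f,\partial_x f,\dots,\partial_x^{e\ell}f$ are $\K(x)$-linearly dependent, and no separate descent is needed. (If you insist on descending an operator, the correct non-commutative analogue of the norm is the LCLM of the conjugates $\sigma(L)$, which \emph{is} permutation-invariant and hence lands in $\K(x)\langle\partial_x\rangle$; this also gives order at most $e\ell$.) With this correction, your bounds ``order $\le e$, degree bounded in terms of $\deg_x P$ and $e$, with $n$ entering only linearly'' for $a(x)^n$ are exactly right and match what the paper uses downstream.
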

\begin{proof}
    The first part is a classical result, see for example~\cite[Thm.~2.7]{Stanley80}. In the proof one shows that the vector space spanned over $\K(x)$ by $(f^{(i)}(x))_{i\geq0}$ is finite-dimensional over $\K(x,a(x))$ which is itself finite-dimensional over $\K(x)$. For the second part, it is enough to set $g(x) = x^n$ which satisfies $x g'(x) = n g(x)$.
\end{proof}

\begin{example}
    Like in \cref{sec:fib_pol} let $\varphi_{\pm}(x) = (x\pm\sqrt{x^2+4})/2$ be the roots of $y(x)^2+xy(x)-1=0$. Then $\varphi_{\pm}(x)^n$ satisfy the LDE
    \[
    (x^2+4) y''(x) + x y'(x) - n^2 y(x) = 0.
    \]
\end{example}

\begin{proof}[Proof of \cref{thm:Ln}]
Write $u_n(x)$ as in (\ref{eq:polCrecSol}). By \cref{lem:poweralgeqdiffeq}, each $a_i(x)^n$ satisfies an LDE of order and degree independent of $n$, hence the same holds for $q_i(n,x)a_i(x)^n$, and finally for $u_n(x)$. It follows that the coefficient sequence of $u_n(x)$ is P-finite with order and degree independent of $n$.
\end{proof}

Since all steps in the proofs above are effective and independent of $N$, this
leads to \cref{alg:seqtermas}. Its
\crefrange{step:seqtermas:charpoly}{step:seqtermas:recurrence} can be seen as
``precomputations'' since they do not depend on $N$. As already mentioned,
\cref{step:seqtermas:init_values} has complexity $O(\log(N))$ and
\cref{step:seqtermas:unroll} has complexity $O(N)$. Thus, \cref{alg:seqtermas}
solves \pbseqterm{} in complexity $O(N)$, up to a potential issue during the
unrolling at \cref{step:seqtermas:unroll} of the recurrence from
\cref{step:seqtermas:recurrence}. Indeed, this unrolling may be impossible for
some values $k$, namely those for which $p_s(N,k)$ vanishes. We will explain
how to overcome this problem in~\cref{sec:singular}.

For practical applications, however, computing the polynomials $q_i(x,n)$ in
\cref{step:seqtermas:minpoly} as well as the LCLM in \cref{step:seqtermas:lclm}
is algorithmically somewhat cumbersome. Thus, generalizing the approach in
\cref{sec:CTfib}, we now propose a variant of \cref{alg:seqtermas} which
replaces \crefrange{step:seqtermas:charpoly}{step:seqtermas:lclm} by an
algorithm based on creative telescoping.

\begin{algorithm}[t]
	\caption{SeqTermAS\(((u_n)_n, N)\)}
	\label{alg:seqtermas}
  \begin{algorithmic}[1]
	  \Require{A polynomial C-finite sequence $(u_n(x))_{n\geq0}$ given by (\ref{eq:def_polCrec2}) with initial conditions, and $N \in \NN$.}
    \Ensure{The polynomial $u_N(x)$.}

    \State $d \gets \deg_x(\chi(y))$ and $\delta \gets \max_{i=0,\dots,r-1}(\deg_x u_i(x))$

    \State\label{step:seqtermas:charpoly}%
    \(\chi(y) \gets \) the characteristic polynomial of $(u_n(x))_{n\geq0}$

    \State \label{step:seqtermas:roots}%
    $a_1(x), \ldots, a_k(x) \gets$ the roots of \(\chi(y)\)
    
    \State \label{step:seqtermas:minpoly}%
    Compute minimal polynomials for $q_1(x,n),\dots,q_k(x,n)$ $ \in \K(a_1(x),\dots,a_k(x))[n]$ such that (\ref{eq:polCrecSol}) holds.

    \State \label{step:seqtermas:ODE}%
    For each $i$ deduce an LDE $L_{i,n} \in \K[n,x]\langle \partial_x \rangle$ with order and degree independent of $n$ such that $L_{i,n}(q_i(x,n) a_i(x)^n) = 0$

    \State \label{step:seqtermas:lclm}%
    $L_n \gets \mathrm{LCLM}(L_{1,n},\dots,L_{k,n}) \in \K[n,x]\langle \partial_x \rangle$

    \State \label{step:seqtermas:recurrence}%
    Compute a recurrence
    \(
        p_s(n,k) c_{n,k+s} + \dots + p_0(n,k)c_{n,k} = 0
    \)
    satisfied by any solution $f_n(x) = \sum_{k \geq 0} c_{n,k} x^k$ of $L_ny=0$

    \State \label{step:seqtermas:init_values}%
    Using binary powering of the companion matrix of the initial recurrence mod $x^s$, compute the values $c_{N,0},\dots,c_{N,s-1}$

    \State \label{step:seqtermas:unroll}%
    Unroll the recurrence from \cref{step:seqtermas:recurrence} for $n=N$ and with initial terms from \cref{step:seqtermas:init_values}
    (see also \cref{sec:singular} for further details)

    \State\label{step:seqtermas:return}%
    \Return $\sum_{k = 0}^{Nd + \delta} c_{N,k} x^k$
  \end{algorithmic}
\end{algorithm}

\subsection{Computing \texorpdfstring{$L_n$}{L\_n} with Creative Telescoping} \label{sec:CT}
Let $U(x,y) = \sum_{n\geq0} u_n(x)y^n \in \K[x]\ps{y}$ be the generating function~\eqref{eq:genfunc} of $(u_n(x))_{n\geq0}$. The sequence is C-finite, so $U(x,y)$ is a rational function. Moreover, the Cauchy integral formula implies \begin{equation} \label{eq:Cauchy}
u_{n}(x) = \frac{1}{2\pi i} \oint_{|y| = \epsilon} \frac{U(x,y)}{y^{n+1}} \mathrm{d}y.
\end{equation}
A \emph{telescoper} of $U(x,y)/y^{n+1}$ is a differential operator
\[
L = p_k(x) \partial_x^k + \cdots + p_0(x) \in \K[x]\langle \partial_x \rangle,
\]
such that $L$ applied to $U(x,y)/y^{n+1}$ is $\partial_y(C(x,y))$ for some rational function $C(x,y)$ called the \emph{certificate}. By the Cauchy integral theorem, $\oint_{|y|=\epsilon} \partial_y(C(x,y)) \mathrm{d}y = 0$, and it follows that $L u_n(x) = 0$, i.e., $L$ yields a differential equation for $u_n(x)$. In this section we will prove that for $U(x,y)/y^{n+1}$ there exists a telescoper $L_n \in \K[n,x] \langle \partial_x \rangle$ whose order and degree do not depend on $n$. Our proof relies on reduction-based creative telescoping and repeatedly uses \emph{Hermite reduction} algorithms~\cite{BoChChLi10,BoChChLi13}.

We now introduce the necessary definitions and recall the
Hermite reduction method. For a more detailed introduction, a
full complexity analysis, and applications of reduction-based creative
telescoping to integration of bivariate rational functions, we refer
to~\cite{BoChChLi10}. Let $\LL = \K(x)$. For a polynomial $Q(y) \in
\LL[y]$, let $Q = Q_1 Q_2^{2} \cdots Q_k^{k}$ be its squarefree
factorization and let $Q^* = Q_1 \cdots Q_k$ denote the squarefree part of
$Q$. We set $Q^- \coloneqq Q/Q^*$. Recall that, given $P,Q \in \LL[y]$,
the \emph{Hermite reduction} algorithm computes two polynomials $A, a \in
\LL[y]$ with $\deg_y a < \deg_y Q^*$ such that
\[
\frac{P}{Q} = \partial_y \left( \frac{A}{Q^-} \right) + \frac{a}{Q^*}.
\]
Given a bivariate rational function $H(x,y) = P(y)/Q(y) \in \K(x,y)$, one may compute the Hermite reduction $(A_i,a_i)$ of $\partial_x^i H$ for $i=0,1,\dots$. Since $\deg_y a_i$ is uniformly bounded by $d^* = \deg_y Q^*$ for each $i$, the $d^*+1$ functions $\{a_i(x,y) \colon 0 \leq i \leq d^*\}$ will be linearly dependent over $\K(x)$. Hence one can find $q_0(x),\cdots,q_{d^*}(x) \in \K(x)$ not all zero, such that $\sum_{i=0}^{d^*} q_i(x)a_i(x) = 0$. It follows then that $L = \sum_{i=0}^{d^*} q_i(x) \partial_x^i$ is a telescoper for $H$.

This procedure cannot be directly applied to $U(x,y)/y^{n+1}$ if $n$ is an indeterminate. At the same time, if $n=N \in \NN$ is fixed, it is \textit{a priori} not obvious that $\deg_x q_i(x)$ will be independent of $N$. Moreover, the complexity of the algorithm will depend on $N$, which we want to avoid. As we will now explain, to achieve this, one should see $U(x,y)/y^{n+1}$ not as a rational function in $x$ and $y$ with potentially large degree in the numerator, but as a hyperexponential function with the parameter $n$ appearing solely as a coefficient in the logarithmic derivative. Recall that $H(x,y)$ is called \emph{hyperexponential} if both $\partial_x H/H$ and $\partial_y H/H$ belong to $\K(x,y)$.

For hyperexponential functions, the Almkvist-Zeilberger
algorithm~\cite{AlZe90} was the first practical method to find telescopers
and certificates. Indeed, as we mentioned in \cref{sec:CTfib}, the command
\begin{verbatim}
    DEtools[Zeilberger](1/(1-x*y-y^2)/y^n, x, y, Dx);
\end{verbatim}
in Maple immediately finds the differential equation for the $n$th Fibonacci polynomial for a variable $n$. Note that if $n$ is specialized to an integer $N$ before the execution of the command above, the implemented algorithm becomes slower as $N$ grows.

It is, however, not clear that the Almkvist-Zeilberger algorithm applied to $U(x,y)/y^{n+1}$ will always find a telescoper whose degree and order are independent of $n$, even though we know from Section~\ref{sec:method_algeqtodiffeq} that an LDE with this property exists. Therefore, to have a complete algorithm based on creative telescoping, we will invoke the reduction-based method for hyperexponential functions first introduced and analyzed in~\cite{BoChChLi13}. Using the implementation of the latter work, the command in Maple
\begin{verbatim}
    HermiteTelescoping(1/(1-x*y-y^2)/y^n, x, y, Dx);
\end{verbatim}
also immediately finds the correct LDE for $F_n(x)$. The practical advantage for our purpose of using the reduction-based algorithm in comparison to the Almkvist-Zeilberger method is shown in \cref{sec:experiments} (\cref{tab:CT}). The theoretical advantage comes from the following lemma, which guarantees that the algorithm will find a telescoper for $U(x,y)/y^{n+1}$, and consequently an LDE for $u_n(x)$, whose order and degree do not depend on $n$.

\begin{lemma} \label{lem:hermiteCT}
    Let $P(y) \in \LL[n,y]$ and $Q(y) \in \LL[y]$ with $Q(0) \neq 0$. Set $d_n \coloneqq \deg_n P(y)$, $d^* \coloneqq \deg_y Q^*(y)$ and let $k$ be the highest pure power in the square free factorization of $Q(y)$. Then there exist $B(n,y) \in \LL(n)[y]$ and $b(n,y) \in \LL[n,y]$ with $\deg_y b(n,y) \leq d^*$ and $\deg_n b(n,y) \leq d_n + k$ such that
    \begin{equation} \label{eq:Reductionwish}
        \frac{P(y)}{Q(y) y^{n+1}} = \partial_y \left( \frac{B(n,y)}{Q^-(y)y^{n}} \right) + \frac{b(n,y)}{Q^*(y) y^{n+1}}.
    \end{equation}
\end{lemma}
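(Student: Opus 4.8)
The plan is to recognize the claimed identity as the result of Hermite reduction applied to a \emph{hyperexponential} function, together with a careful bookkeeping of the degree in \(n\). Rewrite the left-hand side as \(\frac{P(y)}{Q(y)\,y}\cdot y^{-n}\): this is a hyperexponential function of \(y\) over \(\LL(n)\), with kernel \(y^{-n}\) whose logarithmic derivative \(\partial_y(y^{-n})/y^{-n} = -n/y\) lies in \(\LL(n)(y)\). Since \(Q(0)\neq 0\) we have \(\gcd(Q,y)=1\), so the squarefree part of \(Q(y)y\) is \(Q^*(y)\,y\) and the complementary factor \((Q(y)y)/(Q^*(y)y)\) equals \(Q^-(y)\). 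Running the hyperexponential variant of Hermite reduction~\cite{BoChChLi13} on \(\frac{P(y)}{Q(y)y}\cdot y^{-n}\) — which lowers the multiplicities of the factors of \(Q\) while accounting for the differentiation of the kernel — therefore produces precisely an expression of the shape \(\partial_y\!\big(\tfrac{B(n,y)}{Q^-(y)y^n}\big) + \tfrac{b(n,y)}{Q^*(y)y^{n+1}}\) with \(\deg_y b \le d^*\). Hence the only genuinely new content of the lemma is the estimate \(\deg_n b \le d_n + k\), and everything comes down to tracking how the parameter \(n\) propagates through the reduction.

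I would run this in two phases. First, divide \(P\) by \(Q\) in \(\LL[y]\), writing \(P = \gamma Q + \rho\) with \(\deg_y\rho < \deg_y Q\) and \(\gamma,\rho\in\LL[n,y]\), \(\deg_n\gamma,\deg_n\rho\le d_n\) (this is \(n\)-free linear algebra). The ``polynomial part'' \(\tfrac{\gamma}{y^{n+1}}\) integrates \emph{exactly}: from \(\tfrac{y^j}{y^{n+1}} = \tfrac1{j-n}\,\partial_y\!\big(y^j y^{-n}\big)\) one obtains \(\tfrac{\gamma}{y^{n+1}} = \partial_y\!\big(\Psi(n,y)\,y^{-n}\big)\) with \(\Psi \in \LL(n)[y]\) (legitimate because \(n\) is transcendental over \(\LL\)); this phase contributes only to \(B\), and is precisely why \(B\) is merely rational — not polynomial — in \(n\). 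It remains to reduce \(\tfrac{\rho(y)}{Q(y)y}\cdot y^{-n}\), which I would do by peeling off the multiplicities of the squarefree factors of \(Q\). The elementary step that lowers \(W^m\) to \(W^{m-1}\) (with \(W\) a squarefree factor of \(Q\) and \(m\ge2\)) first splits off the \(W^m\)-part by a partial-fraction step — an \(n\)-free operation, since \(W\) and \((Qy)/W^m\) are free of \(n\) — and then, using a Bézout relation \(\rho_m = SW+TW'\) (again \(n\)-free), writes \(\tfrac{\rho_m}{W^m}y^{-n} = \tfrac{S}{W^{m-1}}y^{-n} + \tfrac{TW'}{W^m}y^{-n}\) with
\[
  \frac{TW'}{W^m}\,y^{-n} \;=\; -\frac{1}{m-1}\,\partial_y\!\left(\frac{T}{W^{m-1}}\,y^{-n}\right) \;+\; \frac{1}{m-1}\left(\frac{T'}{W^{m-1}} \;-\; \frac{n\,T}{y\,W^{m-1}}\right) y^{-n} .
\]
The only term carrying a new power of \(n\) is \(-\tfrac{nT}{y W^{m-1}}\), which multiplies the surviving numerator by a single factor linear in \(n\); moreover \(y^{-n}/y = y^{-(n+1)}\), so this does not worsen the pole at \(y=0\) beyond order \(n+1\). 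Since the piece coming from the factor of largest multiplicity \(k\) is carried through \(k-1\) such steps (those from lower-multiplicity factors through fewer), the contributions of distinct squarefree factors stay separated by the partial-fraction splitting, and no Bézout or partial-fraction step raises \(\deg_n\), the numerator \(b\) of the final remainder satisfies \(\deg_n b \le d_n + (k-1) \le d_n + k\). Collecting the \(\partial_y(\cdot)\) contributions of both phases over the common denominator \(Q^-(y)y^n\) gives \(B \in \LL(n)[y]\); the remainder has denominator exactly \(Q^*(y)y^{n+1}\), with \(\deg_y b \le d^*\) since properness is preserved throughout (we started from \(\deg_y\rho < \deg_y Q\)).

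The step I expect to demand the most care is this last bookkeeping of \(\deg_n\): verifying that the partial-fraction splittings and Bézout identities — all carried out among the factors of \(Q\) and \(y\), which do not involve \(n\) — genuinely leave the \(n\)-degree unchanged, that each elementary reduction inflates it by at most one, and that along any chain of reductions this occurs at most \(k-1\) times. A secondary point is to check that the simple poles at \(y=0\) produced by differentiating the kernel combine to exactly order \(n+1\), so the remainder's denominator is \(Q^*(y)y^{n+1}\) and not something larger — this is where the hypothesis \(Q(0)\neq0\) is used.
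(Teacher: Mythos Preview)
Your approach is correct and rests on the same core idea as the paper's proof: both perform a Hermite-type reduction on $\frac{P}{Qy}\cdot y^{-n}$ and pin the growth of $\deg_n$ entirely on the single term arising each time the kernel $y^{-n}$ is differentiated. The organization differs, however. You run the hyperexponential reduction one elementary step at a time (lowering $W^m$ to $W^{m-1}$ for each squarefree factor $W$ separately), and must therefore track $n$ through every Bézout identity, partial-fraction split, and recombination---hence your own caveat about bookkeeping. The paper instead inducts on $d^- = \deg_y Q^-$: at each level it applies the \emph{rational} (hence entirely $n$-free) Hermite reduction to $P/(Qy)$ as a black box, obtaining $\partial_y(A/Q^-) + a/(Q^{*}y)$, and then observes that the only correction needed to account for the kernel is the single term $nA/(Q^{-}y)$; it recurses on $H = a/(Q^{*}y) + nA/(Q^{-}y)$, whose $d^-$ and $k$ are both strictly smaller. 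This makes the $\deg_n$ accounting trivial---exactly one new factor of $n$ per inductive level---at the price of a less explicit algorithm. Your route is closer to an actual implementation and has the minor bonus of making the sharper bound $\deg_n b \le d_n + k - 1$ visible directly.
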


\begin{proof}
We are going prove the statement by induction on $d^- \coloneqq \deg_y Q^-$. If $d^- = 0$, then $Q^* = Q$ and the Euclidean division gives $P = P_1 Q + b_1$ with $\deg_y b_1 < d^*$. Moreover,
\[
\frac{P_1(y)}{y^{n+1}} = \partial_y \left( \frac{B_1(n,y)}{y^n} \right),
\]
where $B_1(n,y)$ is $P_1(y)$ with the $k$th coefficient $p_k$ replaced by $p_k/(k-n)$. Setting $B=B_1$ and $b = b_1$ proves the induction basis.

Now assume that $d^->0$ and note that
\[
\partial_y \left( \frac{B(n,y)}{Q^-(y)y^{n}} \right) =  y^{-n} \partial_y \left( \frac{B(n,y)}{Q^-(y)} \right) - y^{-n-1} n  \frac{B(n,y)}{Q^-(y)},
\]
so equation (\ref{eq:Reductionwish}) is equivalent to
\begin{equation} \label{eq:1}
\frac{P(y)}{Q(y) y} = \partial_y \left( \frac{B(n,y)}{Q^-(y)} \right)  - n  \frac{B(n,y)}{Q^-(y) y } + \frac{b(y)}{Q^*(y) y}.
\end{equation}
The Hermite reduction applied to $\frac{P(y)}{Q(y) y}$ yields $A(y), a(y) \in \LL[n,y]$ with $\deg_y a(y) \leq d^*$ and $\deg_n A(y), \deg_n a(y) \leq d_n$ such that
\begin{equation} \label{eq:2}
\frac{P(y)}{Q(y) y} = \partial_y \left( \frac{A(y)}{Q^-(y)} \right) + \frac{a(y)}{Q^*(y)y}.
\end{equation}
Comparing (\ref{eq:1}) and (\ref{eq:2}), we now look at
\[
H(y) \coloneqq \frac{a(y)}{Q^*(y) y} + n\frac{A(y)}{Q^-(y) y}.
\]
The denominator of $H(y)$ is $R(y) y \coloneqq \mathrm{lcm}(Q^*, Q^-) y$. Clearly, $R^* = Q^*$ and $\deg_y R^- < d^-$. The highest pure power in the square free factorization of $R(y)$ is at most $k-1$ and the degree of the numerator in $n$ of $H(y)$ is at most $d_n+1$. Hence, by induction, we may write
\[
\frac{a(y)}{Q^*(y) y} + n\frac{A(y)}{Q^-(y) y} = \partial_y \left( \frac{C(n,y)}{Q^-(y)} \right)  - n  \frac{C(n,y)}{Q^-(y) y } + \frac{c(n,y)}{Q^*(y) y}
\]
with $\deg_y c(n,y) < d^*$ and $\deg_n c(n,y) \leq d_n+k$. Setting $B(n,y) = A(y) + C(n,y)$ and $b(y) = a(y) + c(n,y)$ finishes the proof.
\end{proof}

The proof of \cref{lem:hermiteCT} induces an algorithm for the computation of $B(n,y)$ and $b(n,y)$ given $P(y),Q(y) \in \K[x,y]$ such that (\ref{eq:Reductionwish}) holds, $\deg_y b \leq \deg_y Q^*$ and also $\deg_n b$ bounded in terms of $Q$. It can be seen as a special case of the procedure \textsc{HermiteReduction} in \cite{BoChChLi13}.
The LDE for $u_n(x)$ can now be found as in \cref{alg:telescnthterm}.

\begin{algorithm}[ht]
	\caption{TelescNthTerm\((U(x,y))\)}
	\label{alg:telescnthterm}
  \begin{algorithmic}[1]
    \Require{A rational function $U(x,y) \in \K(x,y) \cap \K\ps{x,y}$.}
    \Ensure{A diff.\ operator in $\K[n,x] \langle \partial_x\rangle$ for $u_n(x) = [y^n] U(x,y)$.}

    \State Write $U(x,y) = P(x,y)/Q(x,y)$ and let $d^* = \deg_y Q^*(x,y)$
    \State For each $i = 0,\dots,d^*$ compute the polynomial $b_i(n,x,y) = b(n,y)$ as in \cref{lem:hermiteCT} applied to $\partial_x^i U(x,y)/y^{n+1}$
    \State \label{step:telescnthterm:linear_relation}%
    Find a linear relation of $\{b_i(n,x,y) \colon 0 \leq i \leq d^*\}$ over $\K(n,x)$, that is polynomials $q_0(n,x),\dots,q_{d^*}(n,x)$ not all zero with $\sum_{i=0}^{d^*} q_i(n,x) b_i(n,x,y) = 0$
    \State Return the differential operator $\sum_{i=0}^{d^*} q_i(n,x) \partial_x^{i} \in \K[n,x]\langle \partial_x \rangle$
  \end{algorithmic}
\end{algorithm}

Note that, as in the usual reduction-based creative telescoping, the linear relation at
\cref{step:telescnthterm:linear_relation} exists because $\deg_y( b_i(n,x,y))$ is uniformly
bounded by $d^*$. Writing $U(x,y) = P(x,y)/Q(x,y)$, the operator $L = \sum_{i=0}^{d^*} q_i(n,x)
\partial_x^{i}$ annihilates $u_n(x) = [y^n] U(x,y)$ since
\begin{align*}
2\pi i \cdot L_n u_n(x) =  L_n \oint
 \frac{U(x,y)}{y^{n+1}} & \mathrm{d}y = \oint  L_n
 \frac{P(x,y)}{Q(x,y)y^{n+1}} \mathrm{d}y \\= \oint \partial_y \frac{\sum_{i=1}^{d^*} q_i(n,x) B_i(n,y)}{Q^-(y) y^n} & \mathrm{d}y +  \oint \frac{\sum_{i=0}^{d^*} q_i(n,x) b_i(n,x,y)}{Q^*(y) y^{n+1}} \mathrm{d}y;
\end{align*}
the first integral vanishes by Cauchy's integral theorem, 
and the second integral vanishes by construction of the $q_i(n,x)$.

\begin{algorithm}[t]
	\caption{SeqTermCT\(((u_n)_n, N)\)}
	\label{alg:seqtermct}
  \begin{algorithmic}[1]
	  \Require{A polynomial C-finite sequence $(u_n(x))_{n\geq0}$ given by (\ref{eq:def_polCrec2}) with initial conditions, and $N \in \NN$.}
    \Ensure{The polynomial $u_N(x)$.}

    \State $U(x,y) \gets$ the rational generating function of $u_n(x)$ in (\ref{eq:genfunc})

    \State $L_n \gets$ $\Call{TelescNthTerm}{U(x,y)}$

    \State\CommentLine{follow \crefrange{step:seqtermas:recurrence}{step:seqtermas:return} of \cref{alg:seqtermas}}
    %%\State \label{step:seqtermct:recurrence}%
    %%Compute a recurrence
    %%\(
    %%    p_s(n,k) c_{n,k+s} + \dots + p_0(n,k)c_{n,k} = 0
    %%\)
    %%satisfied by any solution $f_n(x) = \sum_{k \geq 0} c_{n,k} x^k$ of $L_ny=0$

    %%\State \label{step:seqtermct:init_values}%
    %%Using binary powering, compute the values $c_{N,0},\dots,c_{N,s-1}$  of the companion matrix of the initial recurrence mod $x^s$

    %%\State \label{step:seqtermct:unroll}%
    %%Unroll the recurrence from \cref{step:seqtermct:recurrence} for $n=N$ and with initial terms from \cref{step:seqtermct:init_values}

    %%\State \Return $\sum_{k = 0}^{Nd} c_{N,k} x^k$, where $d = \deg_x(\chi(y))$
  \end{algorithmic}
\end{algorithm}

This provides a variant for
\crefrange{step:seqtermas:charpoly}{step:seqtermas:lclm} of \cref{alg:seqtermas},
as described in \cref{alg:seqtermct}.
The above-mentioned potential issue with unrolling persists; the next section deals with this problem.

\subsection{The singular case} \label{sec:singular}
In this section we discuss the potential issue of our algorithm that can occur if the sequence for the coefficients of $u_n(x)$ cannot be unrolled due to singularities. We shall first highlight this problem and its solution by means of an example, then in the last paragraph of this section we explain the general strategy.

Consider the polynomial C-finite sequence $u_n(x)$ given by
\[
u_{n + 3}(x) - (x^2 + x + 2)u_{n + 2}(x)  + x(x^2 + 2x + 2)u_{n + 1}(x) - 2x^3u_n(x) = 0,
\]
for all $n\geq0$ with initial conditions $u_0 = 3, u_1 = x^2 + x + 2, u_2 = x^4 + x^2 + 4$. The characteristic polynomial of the defining recurrence is easily computed and turns out to factor completely over $\K[x][y]$:
\[
\chi(x,y) = (y - 2)(y-x)(y-x^2).
\]
With the initial conditions and after a partial fraction decomposition it follows that the generating function of $u_n(x)$ is given by
\[
U(x,y) = \frac{1}{1-2 y}+\frac{1}{1- x^{2} y}+\frac{1}{1 - x y}.
\]
Hence, the solution is $u_n(x) = 2^n + x^n + x^{2n}$ and can be written down in $O(\log(N))$ operations. However, as we shall explain now, the direct application of any of the methods described earlier fails.

According to \cref{thm:Ln}, $u_n(x)$ satisfies an LDE whose degree and order are independent of $n$. Indeed, using creative telescoping one quickly finds an annihilator for $u_n(x) =  \oint U(x,y)/y^{n+1} \mathrm{d}x$:
\[
\bigl( x^2\partial_x^3 - 3x(n - 1)\partial_x^2 + (2n - 1)(n - 1)\partial_x \bigr) u_n(x) = 0.
\]
Converting this LDE to a recurrence for the coefficient sequence of $u_n(x) = \sum_{k\geq0} c_{n,k} x^k$ we find
\begin{equation} \label{eq:rec_singularec}
    (2n-k) (n-k) k c_{n,k} = 0, \quad k\geq0.
\end{equation}
In other words, $c_{n,k} = 0$ for all $k \in \NN$ except $k\in\{0,n,2n\}$. In
order to ``unroll'' this recurrence we need to know $c_{n,0},c_{n,n}$ and
$c_{n,2n}$. However, it is not immediately clear how to compute those terms for
$n=N$ in $O(N)$ operations from the initial input (without using the
explicit solution).

We propose the following easily generalizable solution: consider  $v_n(x) = u_n(x+1)$. Then the LDE for $v_n(x)$ is given by
\[
\bigl( (x+1)^2\partial_x^3 - 3(x+1)(n - 1)\partial_x^2 + (2n - 1)(n - 1)\partial_x \bigr) v_n(x) = 0,
\]
and for the coefficient sequence of $v_n(x) = \sum_{k\geq0} d_{n,k} x^k$ we find
\[
(k + 1)(k + 2) d_{n,k+2} - (k + 1)(3N -2k - 1) d_{n,k+1} + (2n - k)(n - k)d_{n,k}= 0.
\]
Now the leading coefficient of the recurrence is $(k + 1)(k + 2) \neq 0$, so we can easily unroll it after determining the first two terms, by computing them via binary powering of the corresponding companion matrix mod~$x^2$. Having computed $v_N(x)$, it remains to find $u_N(x) = v_N(x-1)$. Note that expanding the polynomial results in an $O(\timepm{N})$ algorithm. However, recall from (\ref{eq:rec_singularec}) that we only need to compute $c_{N,N}$ and $c_{N,2N}$, or, in other words, the coefficients of $x^N$ and $x^{2N}$ in $v_N(x-1)$. For any $i$ it holds that
\begin{equation} \label{eq:dntocn}
c_{N,i} = \sum_{k \geq 0} d_{N,k} \binom{k}{i} (-1)^{k-i},
\end{equation}
and the sum is finite because $v_N(x)$ is a polynomial. Clearly, it can be computed in complexity $O(N)$ for any $i$.

Generally speaking, an issue with unrolling the recurrence for $(c_{n,k})_{k \geq 0}$ occurs if the roots of the leading polynomial are positive integers that depend on $n$. Indeed, roots that are nonintegral clearly do not cause any problems in the unrolling step and if a root is independent of $n$ then we may just compute more initial terms while the complexity of this step stays bounded by $O(\log N)$. Let $S$ be the set of the problematic roots. Note that the size of $S$ is independent of $n$ since $S$ is a subset of the roots of the leading polynomial $p_s(n,x)$ in \eqref{eq:recuN} and $\deg_x p_s(n,x)$ is bounded independently of $n$ by \cref{thm:Ln}. Moreover, $S$ can be nonempty only if the LDE for $u_n(x)$ is singular at 0 (that is, if $q_\ell(x) = q_\ell(n,x)$ in (\ref{eq:dfin}) vanishes at $x=0$). In this case, one can always define $v_n(x) = u_n(x+c)$ for $c \in \K$ a nonsingular point of the LDE ($q_\ell(n,c) \neq 0$). Then the coefficients $d_{n,k}$ of $v_n(x)$ can be computed from $O(1)$ initial conditions via unrolling a recurrence. Using the formula (\ref{eq:dntocn}) (with $-c$ instead $-1$) and the fact that $v_n(x)$ is a polynomial, one can compute the coefficients $c_{N,i}$ for $i \in S$. It is then possible to unroll the recurrence for $(c_{N,k})_{k \geq 0}$ and find $u_N(x)$ in complexity~$O(N)$.

%%%%%%%%%%%%%%%%%%%%%%%%%%%%%%%%%%%%%%%%%%%%%%%%%%%%%%%%%%%%%%%%%%%%%%%%

\section{Impact on polynomial matrix power}
\label{sec:bivmodpow_polmatpow}

Here is an algorithm for \pbpolmatpow{} using \pbseqterm. Let \(M(x)\) in
\(\pring_{\le d}^{r \times r}\) and \(p_{i,j,n}(x)\) be the \((i,j)\) entry of
\(M(x)^n\), for \(n \ge 0\) and \(i\) and \(j\) in \(\{1,\ldots,r\}\). The
sequence $(p_{i,j,n}(x))_{n\geq 0}$ is polynomial C-finite, with a
recurrence given by the characteristic polynomial
\[
  \chi_M(x,y)
  \coloneqq \det(y \, I_r - M(x))
  = y^r - c_{r-1}(x) y^{r-1} - \cdots - c_0(x) .
\]
That is, $p_{i,j,n+r}(x) = c_{r-1}(x) p_{i,j,n+r-1}(x) + \cdots + c_0(x)
p_{i,j,n}(x)$ for all $n\geq 0$. Thus, to compute $M(x)^N$, it is enough to
find $\chi_M(x,y)$ (in $O(1)$, i.e.\ independent of $N$), to compute the
polynomials $p_{i,j,n}(x)$ for $1\leq i, j \leq r$ and $0\leq n <r$ (also in
$O(1)$) and to return the entries $p_{i,j,N}(x)$ of $M(x)^N$ via \pbseqterm.
As such, this approach uses \(r^2\) calls to \pbseqterm, with total cost
\(\bigO{N}\).

This section describes an algorithm for \pbpolmatpow{} which uses only \(r\)
such calls, through a direct reduction to \pbbivmodpow{} (see
\cref{sec:modpow_to_matpow}). Our solution for \pbbivmodpow, via \(r\) calls to
\pbseqterm, is presented in \cref{sec:seqterm_to_modpow} and completes the
proof of \cref{thm:main}.

\subsection{Computing bivariate modular powers}
%% Computing \texorpdfstring{$Q(x,y)^N \bmod P(x,y)$}{Q(x,y)\^{}N mod P(x,y)}}
\label{sec:seqterm_to_modpow}

Let $\LL = \K[x]$ and $P, Q \in \LL[y]$. Assume that $P$, seen as a univariate
polynomial in \(y\) of degree \(r\), is monic. For $N \in \NN$, the Euclidean
division in $\LL[y]$ ensures the existence of unique $S, R \in \LL[y]$ such
that $\deg_y(R) < r$ and \(Q^N = S P + R\). The polynomial $R$ is $Q^N \bmod
P$. Assume that \(P\) and \(Q\) are fixed, and let $d \coloneqq \deg_x(P)$
(which is thus in \(O(1)\)). Then, writing $R = \sum_{i=0}^{d-1} r_i(x)y^i$, it
holds that $\deg_x r_i(x) = O(N)$. The efficient computation of $R$ when $Q=y$,
given $P(x,y)$ and $N$, is the first step for proving \pbbivmodpow{} in
\cref{thm:main}.

\begin{algorithm}[b]
	\caption{BivModPow\((P(x,y), Q(x,y), N)\)}
	\label{alg:bivmodpow}
  \begin{algorithmic}[1]
    \Require{$P(x,y),Q(x,y) \in \LL[y]$ with $P(x,y)$ monic in $y$, and $N \in \NN$.}
    \Ensure{$Q(x,y)^N \bmod P(x,y)$.}

    \State $A(x,t) \gets \mathrm{Res}_y(P(x,y),t-Q(x,y))$
    \State $\bar{A}(x,t) \gets t^r A(x,1/t)$, where $r \coloneqq \deg_t A(x,t)$
    %\For{$i = N-r+1,\dots,N$}
    %  \State \(u_{i}(x) \gets [t^{i}] \frac{1}{\bar{A}(x,t)}\) using \pbseqterm
    %\EndFor
    \State \InlineFor{$i = N-r+1,\dots,N$}{\(u_{i}(x) \gets [t^{i}] \frac{1}{\bar{A}(x,t)}\) using \pbseqterm}
    \State $u(x,t) \gets u_{N-r+1}(x) + \cdots + u_N(x) t^{r-1}$
    \State $v(x,t) \gets u(x,t) \bar{A} (x,t) \bmod t^r$; \, $B(x,t) \gets v(1/t)t^{r-1}$
    \State \Return $B(x,Q(x,y)) \bmod P(x,y)$
  \end{algorithmic}
\end{algorithm}

We shall first illustrate the connection of \pbseqterm{} and \pbbivmodpow{} by means of an example. Let $P(x,y) = y^2-xy-1$ and $Q(x,y)=y$, i.e., we are looking for $F_{n-1}(x),F_n(x) \in \LL$ such that
\begin{equation} \label{eq:yNfib}
y^n = S(x,y) (y^2-xy-1) + y F_{n}(x) + F_{n-1}(x),
\end{equation}
for some polynomial $S(x,y) \in \LL[y]$. Replace $y$ by $1/y$ in (\ref{eq:yNfib}) and then multiply by $y^{n+1}/(1-xy-y^2)$ to obtain
\[
\frac{y}{1-xy-y^2} = S(x,1/y) y^{n-1} + y^{n} \frac{F_{n}(x) + y F_{n-1}(x)}{1-xy-y^2}.
\]
Now observe that $\deg_y (S(x,1/y) y^{n-1}) \leq n-1$, hence by extracting the $n$th and $(n+1)$st coefficients,
\begin{align*}
F_{n}(x) & = [y^n] \frac{y}{1-xy-y^2}  \quad \text{ and} \\
F_{n-1 }(x) + x F_{n}(x) & = [y^{n+1}] \frac{y}{1-xy-y^2}.
\end{align*}
We conclude that $F_k(x)$ is the $k$th Fibonacci polynomial, for $k=n$ and $k =
n-1$. In particular, each $F_k(x)$ satisfies a linear recurrence with constant
polynomials and can be found in $O(k)$ by \pbseqterm.

This strategy, outlined on an example, generalizes in the obvious way. Explicitly, we have the following lemma (see \cite[Lem.\,2]{BoMo21}).

\begin{lemma} \label{lem:pbbivmodpow1}
    Let $P \in \K[x,y]$ and $r \coloneqq \deg_y(P)$, with $P(x,0) \neq 0$ and
    reversal $\bar{P}(x,y) \coloneqq y^r P(x,\frac{1}{y})$. Write
    \(
        \frac{1}{\bar{P}(x,y)} \eqqcolon \sum_{k \geq 0} u_k(x) y^k.
    \)
    %% Let $u(x,y) = u_{N-r+1}(x) + \cdots + u_N(x) y^{r-1}$ and finally 
    Finally, let $v(x,y) = (u_{N-r+1}(x) + \cdots + u_N(x) y^{r-1}) \bar{P} (x,y) \bmod y^r$. Then $y^N \bmod P(x,y) = v(1/y)y^{r-1}$.
\end{lemma}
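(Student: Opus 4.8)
The plan is to mimic, in general, the manipulation carried out on the Fibonacci example just above the lemma, and then to read off the claimed identity by matching coefficients of $y^k$ for $r-1 < k \le N$. First I would start from the Euclidean division $y^N = S(x,y) P(x,y) + R(x,y)$ in $\K[x][y]$ with $\deg_y R < r$, which exists and is unique since $P$ is monic in $y$ once we note $P(x,0)\neq 0$ forces $\deg_y P = r$; write $R(x,y) = \sum_{i=0}^{r-1} r_i(x) y^i$, so that the goal becomes showing $r_i(x) = $ the appropriate combination of $u_{N-r+1},\dots,u_N$ dictated by $v$. The substitution $y \mapsto 1/y$ turns this into $y^{-N} = S(x,1/y) P(x,1/y) + R(x,1/y)$; multiplying through by $y^N$ and using $\bar P(x,y) = y^r P(x,1/y)$ gives $1 = y^{N-r} S(x,1/y) \bar P(x,y) + y^N R(x,1/y)$.

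Next I would divide by $\bar P(x,y)$, which is legitimate as a power series identity in $\K[x]\ps{y}$ because $\bar P(x,0) = $ leading coefficient of $P$ in $y$ $= 1$ (here is where monicity of $P$ in $y$ is used, so that $1/\bar P$ is a well-defined power series with the expansion $\sum_{k\ge 0} u_k(x) y^k$). This yields
\[
\sum_{k\ge 0} u_k(x) y^k \;=\; y^{N-r}\, S(x,1/y) \;+\; y^N\, \frac{R(x,1/y)}{\bar P(x,y)}.
\]
Now $S(x,1/y)$ is a Laurent polynomial in $y$; the key degree bound is $\deg_y S = N - r$ (since $y^N = SP + R$ with $\deg_y(SP) = N$ and $\deg_y P = r$), hence $y^{N-r} S(x,1/y)$ is a polynomial in $y$ of degree at most $N-r$, i.e.\ it only contributes to coefficients $y^k$ with $k \le N-r < N-r+1$. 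Therefore, extracting the coefficients of $y^{N-r+1}, \dots, y^N$ on both sides kills the $S$-term and gives, for those $k$,
\[
u_k(x) \;=\; [y^{k}]\left( y^N \frac{R(x,1/y)}{\bar P(x,y)} \right) \;=\; [y^{k-N}]\,\frac{R(x,1/y)}{\bar P(x,y)}.
\]
Writing $R(x,1/y) = \sum_{i=0}^{r-1} r_i(x) y^{-i}$ and $1/\bar P(x,y) = \sum_{j\ge 0} u_j(x) y^j$, the coefficient of $y^{k-N}$ on the right is $\sum_{i=0}^{r-1} r_i(x)\, u_{k-N+i}(x)$; this relates the sought $r_i(x)$ to the $u_j$ in a linear, triangular fashion as $k$ runs over $N-r+1,\dots,N$.

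The final step is to package this linear system as the stated closed form $v(x,y) = \bigl(u_{N-r+1}(x) + \cdots + u_N(x)\,y^{r-1}\bigr)\,\bar P(x,y) \bmod y^r$ and to check that $R(x,y) = v(x,1/y)\,y^{r-1}$. I would argue this by observing that multiplying the power-series identity $\sum_k u_k y^k = y^{N-r}S(x,1/y) + y^N R(x,1/y)/\bar P(x,y)$ through by $\bar P(x,y)$ and isolating the contribution of the coefficients of $y^{N-r+1},\dots,y^N$ is exactly what the truncation $\bmod\, y^r$ performs after the reversal $y\mapsto 1/y$; concretely, $v(x,1/y) y^{r-1}$ has $y$-degree $\le r-1$, matches $R$ by the coefficient computation above, and by uniqueness of the remainder in Euclidean division it must equal $R = y^N \bmod P$. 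I expect the main obstacle to be bookkeeping: tracking the exact index ranges (which coefficients of $y$ survive the elimination of the $S$-term) and verifying that the reversal-and-truncate operation on $\bigl(\sum_{i} u_{N-r+1+i}(x) y^i\bigr)\bar P(x,y)$ reproduces precisely the coefficients $r_i(x)$ — essentially the same content as \cite[Lem.\,2]{BoMo21}, to which I would appeal for the routine verification if desired.
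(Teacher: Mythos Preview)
Your proposal is correct and mirrors exactly the manipulation the paper carries out on the Fibonacci example immediately preceding the lemma (substitute $y\mapsto 1/y$, multiply through to bring in $\bar P$, divide by $\bar P$, then read off the coefficients of $y^{N-r+1},\dots,y^N$); the paper itself does not give a separate proof but defers to \cite[Lem.\,2]{BoMo21}, whose argument is the same. One minor slip: the clause ``$P(x,0)\neq 0$ forces $\deg_y P=r$'' is confused (the degree is $r$ by definition, and $P(x,0)\neq 0$ says nothing about monicity), but since you correctly invoke monicity from the ambient \textsc{BivModPow} hypotheses where it is actually needed, this does not affect the argument.
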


The sequence $(u_k(x))_{k\geq0}$ in \cref{lem:pbbivmodpow1} is C-finite because its generating function is rational. Hence, using \pbseqterm, the $r = O(1)$ many terms $u_{N-r+1}(x),\dots,u_{N}(x)$ can be computed in complexity $O(N)$. It follows that the case $Q(x,y)=y$ of \pbbivmodpow{} can be solved in $O(N)$ steps as well.

Finally, the computation of $Q(x,y)^N \bmod P(x,y)$ can be reduced to $y^N \bmod P(x,y)$ with a resultant precomputation (see \cref{lem:bivmodpow_resultant}).
This leads to \cref{alg:bivmodpow}, which solves \pbbivmodpow{} in
\(\bigO{N}\).

\begin{lemma}
  \label{lem:bivmodpow_resultant}
   Let $P(y), Q(y) \in \LL[y]$. Define $A(t), B(t) \in \LL[t]$ by $ A(t) = \mathrm{Res}_y(P(y),t-Q(y))$ and $B(t) = t^N \bmod A(t)$. Then
    \[
        Q(y)^N \bmod P(y) = B(Q(y)) \bmod P(y).
    \]
\end{lemma}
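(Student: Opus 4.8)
The plan is to derive the identity from two classical properties of the resultant. Throughout, I use that $P$ is monic in~$y$, of degree $r = \deg_y P$, and I set $A(t) = \mathrm{Res}_y\bigl(P(y),\, t - Q(y)\bigr) \in \LL[t]$. By the product formula for resultants, $A(t) = \prod_{i=1}^{r} \bigl(t - Q(\alpha_i)\bigr)$, where $\alpha_1,\dots,\alpha_r$ are the roots of $P$ (counted with multiplicity) in $\closure{\ffield}$; in particular $A(t)$ is monic of degree~$r$ in~$t$, so the Euclidean division in $\LL[t]$ defining $B(t) = t^N \bmod A(t)$ is legitimate.

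The crux is to show that $P(y)$ divides $A(Q(y))$ in $\LL[y]$. For this I would invoke the standard fact that, for polynomials $f, g$ over any commutative ring~$R$, the resultant $\mathrm{Res}_y(f, g)$ lies in the ideal $\genby{f, g}$ of $R[y]$; this follows, for instance, from row operations on the Sylvester matrix, or from Cramer's rule applied to the linear combination of $f$ and $g$ that computes the resultant. Applying this with $R = \LL[t]$, $f = P(y)$ and $g = t - Q(y)$ yields an identity
\[
  A(t) = U(t, y)\, P(y) + V(t, y)\,\bigl(t - Q(y)\bigr)
\]
for some $U, V \in \LL[t, y]$. Substituting $t \mapsto Q(y)$ kills the last summand, giving $A(Q(y)) = U(Q(y), y)\, P(y)$; hence $A(Q(y)) \equiv 0 \pmod{P(y)}$.

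To finish, write the Euclidean division $t^N = C(t)\, A(t) + B(t)$ in $\LL[t]$, with $\deg_t B < r$. Substituting $t \mapsto Q(y)$ gives $Q(y)^N = C(Q(y))\, A(Q(y)) + B(Q(y))$ in $\LL[y]$. Reducing modulo $P(y)$, the term $C(Q(y))\, A(Q(y))$ vanishes by the previous paragraph, so $Q(y)^N \equiv B(Q(y)) \pmod{P(y)}$; taking remainders on both sides gives precisely $Q(y)^N \bmod P(y) = B(Q(y)) \bmod P(y)$.

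The only genuine subtlety — and the step I would verify most carefully — is the ideal-membership statement $\mathrm{Res}_y(f, g) \in \genby{f, g}$, here over $R = \LL[t]$; once it is granted, both substitutions $t \mapsto Q(y)$ are purely formal manipulations of polynomial identities. A root-by-root alternative over $\closure{\ffield}$ (each root $\alpha$ of $P$ satisfies $A(Q(\alpha)) = 0$, since $t - Q(\alpha)$ divides $A(t)$) would work but would additionally require a multiplicity argument to obtain the divisibility $P(y) \mid A(Q(y))$, so the ideal-theoretic route is preferable.
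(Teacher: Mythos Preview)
Your proof is correct and follows the same overall line as the paper: show that $P(y)\mid A(Q(y))$, then substitute $t\mapsto Q(y)$ in the Euclidean division $t^N=C(t)A(t)+B(t)$ and reduce modulo~$P(y)$. The only difference is in how you justify the divisibility $P(y)\mid A(Q(y))$: the paper simply invokes the product formula $A(t)=\prod_i(t-Q(a_i))$ over~$\overline{\LL}$ and asserts the divisibility (the multiplicity issue you flag is handled there because a root $a$ of $P$ of multiplicity $m$ contributes the factor $(Q(y)-Q(a))^m$, and $(y-a)\mid Q(y)-Q(a)$), whereas you use the B\'ezout-type identity $\mathrm{Res}_y(f,g)\in\genby{f,g}$ over $\LL[t]$ and specialise $t\mapsto Q(y)$. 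Your route is a bit longer but has the merit of staying over~$\LL$ and sidestepping any discussion of multiplicities; both arguments are standard and lead to the same conclusion.
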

\begin{proof}
    By the definition of the resultant, $A(t) = \prod_i (t-Q(a_i))$ where $a_i \in \overline{\LL}$ are the solutions of $P(y) = 0$. Hence, $P(y)$ divides $A(Q(y))$, which, by construction, divides $B(Q(y)) - Q(y)^N$.
\end{proof}

\subsection{Computing polynomial matrix powers}
%%\texorpdfstring{$M^N$}{M\^{}N} for \texorpdfstring{$M \in \K[x]^{r \times r}$}{M in K[x]\^{}(r x r)}}
\label{sec:modpow_to_matpow}

Let $M \in \pring^{r \times r}$ be an $r \times r$ polynomial matrix of degree
at most $d$. Its power $M^N$ has degree at most $Nd \in O(N)$. Let $P(x,y)$ be
the characteristic polynomial of $M$.  Since $P(x,M) = 0$ by the
Cayley-Hamilton theorem, we get $M^N = R(x,M)$ where $R(x,y) = y^N \bmod
P(x,y)$. The polynomial $R$ can be computed in $O(N)$ via \pbbivmodpow{}. Then
evaluating $R(x,y)$ at $y=M(x)$ has cost $O(N)$ since
$\deg_x(R) \in O(N)$, $\deg_y(R) < r$ with $r \in O(1)$ and $\deg_x(M) = d \in O(1)$.
Hence \cref{alg:polmatpow} is correct and has complexity $O(N)$.

\begin{algorithm}[htb]
	\caption{PolMatPow\((M, N)\)}
	\label{alg:polmatpow}
  \begin{algorithmic}[1]
    \Require{matrix $M(x) \in \K[x]^{r \times r}$, integer $N \in \NN$.}
    \Ensure{$M(x)^N  \in \K[x]^{r \times r}$.}

    \State $P(x,y) \gets$ the characteristic polynomial of $M(x)$
    \State $R(x,y) \gets y^N \bmod P(x,y)$ \Comment{instance of \pbbivmodpow}
    \State \Return $R(x,M(x))$
  \end{algorithmic}
\end{algorithm}

%%%%%%%%%%%%%%%%%%%%%%%%%%%%%%%%%%%%%%%%%%%%%%%%%%%%%%%%%%%%%%%%%%%%%%%%

\section{Experiments}
\label{sec:experiments}

The main precomputation step for all our algorithms consists in 
starting with a rational function $U \in \K(x,y) \cap \K\ps{x,y}$
and in finding a
differential operator $L_n$ that annihilates $u_n(x) = [y^n] U(x,y)$ and whose
degree and order are independent of $n$. For this task, in practice, we may
either use the method described in \S\ref{sec:method_algeqtodiffeq}, or
creative telescoping algorithms for hyperexponential functions.
\cref{tab:CT} summarizes timings for a variety of implementations.

The table reveals that, among these implementations, the fastest one for
computing a telescoper of $U(x,y)/y^{n+1}$ is the reduction-based creative
telescoping in Maple. More specifically, \texttt{redct} is the fastest,
followed by HT. The implementation in \emph{ore\_algebra} \cite{KaMe19} in
SageMath competes best with reduction-based methods.

\cref{tab:unroll} gives timings of an efficient implementation of the remaining
stages after precomputations: computing initial terms (IT), and unrolling (UR).
We observe that IT takes negligible time compared to UR, except for extreme
parameter ranges where, simultaneously, \(r\) and \(d\) are large and \(N\) is
small; this is expected since these ranges correspond to cases where the order
of the recurrence to be unrolled is close to \(N\). We also see that binary
powering is always slower, often by a factor more than \(5\), than the addition
of IT and UR. The speed-up factor is summarized in
\cref{fig:speedup_fixedD,fig:speedup_fixedR,fig:speedup_fixedN}; as expected it
grows when \(N\) grows, with \(r\) and \(d\) fixed.

For large \(N\), in most of the reported cases, performing both the
precomputation and IT$+$UR is much faster than using binary powering. Still,
this is not always true, e.g.~for \(r=5\). One has to keep in mind that
\texttt{redct} is not implemented in low-level Maple, and targets rational
coefficients: for a more meaningful assessment of the precomputation part, it
would be interesting to have an implementation of creative telescoping which is
fully optimized and specialized to coefficients in a word-size prime field.

%%\item For the tested problem, binary powering costs \(\bigO{r^\expmm
%%  \timepm{dN}}\) while the unrolling costs \(\bigO{\ell d_x N}\),
%%  with \(\ell d_x \approx r^4 d\) on the tested examples.

\section{Perspectives}

We have shown that it is possible to beat, both in theory and in practice, the
basic and powerful binary powering method for computing: (i) powers of
polynomial matrices, (ii) terms in polynomial C-finite sequences and (iii)
modular exponentiation for bivariate polynomials. We describe below several
lines of work, including possible optimizations and generalizations, left
for future investigations.

\myparagraph{More detailed complexity analysis}
The most natural direction for future work is to analyze and improve the complexity of
the algorithms in \cref{thm:main} with respect to the parameters~$r$ and~$d$. For simplicity,
these parameters were assumed to be $O(1)$ in this work. For the $N$th power of an \(r \times r\)
matrix $M(x)$ of degree $d$, binary powering has complexity
$O(\timepm{Nd} r^2 + {Nd} r^\omega)$, where $\omega\in[2,3]$ is a feasible exponent of matrix
multiplication over~$\K$. With our approach,
it is legitimate to target a 
differential equation satisfied by the entries of $M(x)^N$
of order $r$ with coefficients in~$x$ of degree~$O(dr^3)$,
yielding a recurrence of order~$O(dr^3)$ and coefficients in~$n$ of degree at most~$r$.
For large~$N$, this would result in a complexity in $O(Ndr^2 \timepm{r})$.
Using different LDEs, of order~$O(r)$ and coefficients  of degree~$O(dr^2)$
could even lead to $O(Ndr \timepm{r})$.

\myparagraph{The $K$th coefficient of the $N$th term} For some (large) integers $N, K \in \NN$, one
might be interested in computing the single coefficient $[x^K y^N] U(x,y)$ of a rational function $U
\in \K(x,y)\cap \K\ps{x,y}$. Equivalently it is natural to wonder: how fast can one compute the $K$th coefficient of the
$N$th term of a C-finite sequence $(u_n(x))_{n \geq 0}$? Using our method, a recurrence with
initial conditions for the coefficients of $u_N(x)$ can be deduced in $O(\log(N))$ 
operations. Then (assuming that the recurrence is nonsingular) the $K$th coefficient can be found
in $O(\timepm{\sqrt{K}})$ operations by using baby-steps/giant-steps techniques~\cite{ChuChu88,BoGaSc07}. 
We expect that, at least under a genericity assumption, this problem can be solved in 
complexity $O(\log(N) + \timepm{\sqrt{K}})$ which is a big improvement compared to the previous
best $O(N+K)$ by Massazza and Radicioni~\cite{MaRa05}. 
%The details should be worked out and possibly these results may be extended to hyperexponential, algebraic or general D-finite functions.

\myparagraph{Polynomial P-finite sequences} A somewhat related task is to study the analogous
problem to \pbseqterm{} for polynomial P-finite sequences, that is for $(u_n(x))_n \in \K[x]^\mathbb{N}$
satisfying
    \[
    p_r(x,n)u_{n+r}(x) + \cdots + p_0(x,n)u_n(x) = 0,
    \]
for $p_i(x,n) \in \K[x,n]$. We expect that, at least under a genericity assumption, a
generalization of \cref{lem:hermiteCT} (based on results in~\cite{BoChLaSa18,Hoeven21}) should exist,
implying in particular that $u_N(x)$ satisfies an LDE of order and degree independent of~$N$.
Generalizing this even further, one might study the creative telescoping problem for rational
functions of the form
    \(
        H(\textbf{x}) = \frac{P(x_1,\dots,x_s)}{Q(x_1,\dots,x_s) R(x_1,\dots,x_s)^n}.
    \)
We expect that (at least generically) the minimal telescoper for $H(\textbf{x})$ has order and
degree independent of $n$ and can be found via a Griffiths-Dwork reduction type approach, based on
ideas from~\cite{BoLaSa13}.

\myparagraph{Connection to the Jordan–Chevalley decomposition} A different approach for computing
powers of matrices uses the \emph{Jordan–Chevalley decomposition} (also called \emph{SN
decomposition}), see e.g.~\cite{HsKoSi96,ElHa98,Schmidt00,CoEsZa11}. It ensures that any polynomial
matrix $M \in \K[x]^{r \times r}$ can be written as
    \(
        M = S + Z,
    \)
where $S\in \K(x)^{r \times r}$ is diagonalizable over $\overline{\K(x)}$, $Z\in \K(x)^{r
\times r}$ is nilpotent, and $SZ=ZS$. From this decomposition it follows that $M^N =
\sum_{i=0}^{r-1} \binom{N}{i} S^{N-i} Z^{i}$. After a change of basis, this reduces to computing a
power of a diagonal matrix with algebraic functions coefficients. Using \cref{lem:poweralgeqdiffeq}
this can be performed efficiently in $O(N)$ operations. It would be certainly interesting to
compare this approach with the other methods.

\myparagraph{A PDE approach for \pbseqterm} 
There is yet another method to deduce recurrence~\eqref{eq-recFn}. 
% For the $N$th Fibonacci polynomial there is yet another method to deduce directly the recurrence relation for its coefficients.  Namely, 
The starting point is that the generating function $F(x,y) = y/(1-xy-y^2)$ of $F_n(x)$ satisfies the linear PDE
\begin{equation}\label{eq:pdeF}
	\textstyle
  (x^2+4) \frac{\partial^2 F}{\partial x^2} +
3x\frac{\partial F}{\partial x}  - y^2 \frac{\partial^2 F}{\partial y^2} - y\frac{\partial F}{\partial y} + F = 0,
\end{equation}
and extracting the coefficient of $x^k y^n$ in~\eqref{eq:pdeF} immediately gives~\eqref{eq-recFn}.
% Generally speaking, 
More generally, such a PDE translates into a recurrence % we can unroll 
if it is linear with polynomial coefficients in $x$ and $y$ 
and if additionally all terms of the form $x^i y^\ell
\frac{\partial^k F}{\partial x^k} \frac{\partial^j F}{\partial y^j}$ have $\ell = j$. A dimension
counting argument in the spirit of~\cite[Lem.\,3]{Lipshitz88} proves that such a PDE exists for
\emph{any} rational function $F(x,y)$. The existence proof is effective and amounts to linear
algebra. 
% However, the underlying bounds on the size of the linear system to be solved can be very large. 
% As an example, for the Fibonacci polynomials the bound provided by such a counting argument
% on the order and degree of the PDE is 40 (to be compared to~2 in (\ref{eq:pdeF})).
A natural
question is whether it is possible to compute such a PDE via creative telescoping
(either Almkvist-Zeilberger~\cite{AlZe90} or reduction-based~\cite{BoChChLi10}),
and how the corresponding method compares to the aforementioned ones.

\myparagraph{Integer case in bit complexity $O(N)$}
Recall the analogy between the bit complexity for finding the $N$th term of a
C-finite sequence over $\ZZ$ and the complexity for finding the $N$th term of a
C-finite sequence over \(\pring\). Our work achieves $O(N)$ for the latter, so
it is now natural to target $O(N)$ for the former, for instance for the $N$th
Fibonacci number or simply \(3^N\). This remains widely open.

% Fakesection Experiments Table
% please do not remove above commented line, used for vim folding, thanks :-)
\onecolumn
\begin{table}[h]
  \caption{%
    \textmd{%
      Timings in seconds for creative telescoping to find a
      telescoper $L_n$ of $P(x,y)/(y^{n+1} Q(x,y))$. Here \(P(x,y)/Q(x,y)\) is
      the generating function for the sequence of the top-right entry of the
      powers of a randomly chosen matrix in $\mathbb{F}_p[x]^{r \times r}$ of
      degree $d$, for a 50-bit prime \(p\), with $Q(x,y)$ the \(y\)-reversal
      of the characteristic polynomial of this matrix. The order of $L_n$ is
      $\ell$, its degree in $n$ is $\mathrm{d}_n$, and $\mathrm{d}_x =
      \deg_x(L_n)$. A blank space means that the computation took more than 1000 seconds.
      We observe empirically that the degree in $x$ is
      $dr(r+1)(2r-1)/2 - r(r-1)$ while its degree in $n$ is $(r-1)(r+2)/2$;
      this is expected asymptotically by~\cite[Thm.\,25]{BoChChLi10} and
      \cref{lem:hermiteCT}, because $\deg_y Q(x,y) = r$ and $\deg_x Q(x,y) =
      dr$.
      The tested implementations are \\
      \textbullet~~ in Maple: \texttt{redct} \cite{BoChLaSa18}; 
                              \texttt{HermiteTelescoping} (HT) \cite{BoLaSa13};
                              \texttt{Zeilberger} (ZB) \cite{AlZe90} in DEtools;
                              \texttt{creative\_telescoping} (c\_t) \cite{Chyzak00}; \\
      \textbullet~~ in SageMath: creative telescoping (\texttt{ct}) from the \emph{ore\_algebra} package \cite{KaMe19}; \\
      \textbullet~~ In Mathematica: \texttt{FindCreativeTelescoping} (FCT),
                                    \texttt{CreativeTelescoping} (CT), and
                                    \texttt{HermiteTelescoping} (HCT), see \cite{Koutschan10}.
      %\begin{itemize}
      %  \item \texttt{redct}: from \cite{BoChLaSa18}, in Maple;
      %  \item HT: \texttt{HermiteTelescoping}, from \cite{BoLaSa13}, in Maple;
      %  \item ZB: \texttt{Zeilberger}, from \cite{AlZe90}, in DEtools in Maple;
      %  \item c\_t: \texttt{creative\_telescoping}, from~\cite{Chyzak00}, in Maple;
      %  \item \texttt{ct}: from \emph{ore\_algebra}, see \cite{KaMe19}, in SageMath;
      %  \item FCT, CT, and HT: respectively \texttt{FindCreativeTelescoping},
      %    \texttt{CreativeTelescoping}, and \texttt{HermiteTelescoping}, see
      %    \cite{Koutschan10}, in Mathematica.
      %\end{itemize}
  %%
    }%%
  }
  \label{tab:CT}
  \begin{tabular}{ll||c|c|c|c|c|c|c|c||c|c|c}
    &     & \multicolumn{4}{c|}{Maple} & Sage & \multicolumn{3}{c||}{Mathematica} & $\ell$ & $\mathrm{d}_n$ & $\mathrm{d}_x$ \\
    $r$ & $d$ & redct & HT  & ZB  & c\_t & ct  & FCT & CT   & HCT  &   &    &       \\ \hline \hline
        & 2   & 0.0   & 0.1 & 0.0 & 0.1  & 0.5 & 0.2 & 0.2  & 0.2 & 2 & 2  & 16    \\ \cline{2-13}
    2   & 4   & 0.0   & 0.0 & 0.0 & 0.1  & 0.6 & 0.4 & 0.4  & 0.3 & 2 & 2  & 34    \\ \cline{2-13}
        & 6   & 0.0   & 0.0 & 0.0 & 0.1  & 0.6 & 0.7 & 0.5  & 0.5 & 2 & 2  & 52    \\ \cline{2-13}
        & 8   & 0.0   & 0.0 & 0.0 & 0.1  & 0.8 & 1.0 & 0.7  & 0.7 & 2 & 2  & 70    \\ \hline \hline
        & 1   & 0.0   & 0.2 & 0.0 & 0.5  & 2.0 & 2.0 & 1.3  & 1.3 & 3 & 5  & 24    \\ \cline{2-13}
        & 2   & 0.0   & 0.1 & 0.8 & 3.4  & 3.1 & 4.0 & 2.6  & 2.5 & 3 & 5  & 54    \\ \cline{2-13}
    3   & 3   & 0.1   & 0.2 & 0.8 & 9.3  & 5.6 & 10  & 5.7  & 5.4 & 3 & 5  & 84    \\ \cline{2-13}
        & 4   & 0.1   & 0.5 & 18  & 19   & 8.2 & 17  & 9.4  & 8.9 & 3 & 5  & 114   \\ \cline{2-13}
        & 5   & 0.2   & 1.1 & 5.1 & 32   & 12  & 25  & 14   & 14  & 3 & 5  & 144   \\ \cline{2-13}
        & 6   & 0.5   & 1.7 & 9.8 & 49   & 17  & 35  & 19   & 20  & 3 & 5  & 174   \\ \hline \hline
        & 1   & 0.4   & 2.9 & 23  & 117  & 20  & 31  & 25   & 25  & 4 & 9  & 58    \\ \cline{2-13}
        & 2   & 1.7   & 17  & 410 & 749  & 45  & 101 & 96   & 95  & 4 & 9  & 128   \\ \cline{2-13}
    4   & 3   & 4.4   & 43  &     &      & 89  & 295 & 376  & 373 & 4 & 9  & 198   \\ \cline{2-13}
        & 4   & 12    & 82  &     &      & 172 & 388 & 752  & 693 & 4 & 9  & 268   \\ \cline{2-13}
        & 5   & 18    & 128 &     &      & 280 & 635 &      &     & 4 & 9  & 338   \\ \hline \hline
        & 1   & 11    & 34  & 538 &      & 163 & 847 & 780  &     & 5 & 14 & 115   \\ \cline{2-13}
    5   & 2   & 64    & 183 &     &      & 515 &     &      &     & 5 & 14 & 250   \\ \cline{2-13}
        & 3   & 159   & 526 &     &      &     &     &      &     & 5 & 14 & 385   \\ \cline{2-13}
        & 4   & 345   &     &     &      &     &     &      &     & 5 & 14 & 520   \\
  \end{tabular}
\end{table}

\begin{table}[h]
  \caption{%
    \textmd{%
      Timings in seconds, using the C++ library NTL \cite{ShoupNTL} and PML
      \cite{HyunNeigerSchost2019}, for computing the top-right entry of the
      \(N\)th power of a randomly chosen matrix in $\mathbb{F}_p[x]^{r \times
      r}$ of degree $d$, for a 50-bit prime \(p\). The first tested method is
      to directly apply binary powering (BP); in the present context, the
      polynomial matrix multiplication of PML is based on
      evaluation-interpolation and 3-prime FFT. The second tested method uses
      \cref{alg:seqtermct} and we do not count ``precomputations'' (already
      showed in \cref{tab:CT}), i.e.~we only report timings for the two
      non-negligible steps that depend on \(N\), namely \cref{step:seqtermas:unroll}
      (UR, unrolling) and \cref{step:seqtermas:init_values} (IT, initial terms)
      from \cref{alg:seqtermas}.
    }%%
  }
  \label{tab:unroll}
  \scalebox{0.7}{
  \begin{tabular}{ll||c|c|c||c|c|c||c|c|c||c|c|c||c|c|c||c|c|c||c|c|c}
    &     & \multicolumn{3}{c||}{\(N=2^{10}\)} & \multicolumn{3}{c||}{\(N=2^{12}\)} & \multicolumn{3}{c||}{\(N=2^{14}\)}
          & \multicolumn{3}{c||}{\(N=2^{16}\)} & \multicolumn{3}{c||}{\(N=2^{18}\)} & \multicolumn{3}{c||}{\(N=2^{20}\)}
          & \multicolumn{3}{c}{\(N=2^{22}\)} 
          \\
    $r$ & $d$ &   BP   &   UR   &   IT   &   BP   &   UR   &   IT   &   BP   &   UR   &   IT   &   BP   &   UR   &   IT   &   BP   &   UR   &   IT   &   BP   &   UR   &   IT   &   BP   &  UR    &   IT   \\ \hline \hline
        & 2   & 1.2e-3 & 5.7e-4 & 3.7e-5 & 5.3e-3 & 2.4e-3 & 4.3e-5 & 2.5e-2 & 9.7e-3 & 4.9e-5 & 1.1e-1 & 3.9e-2 & 5.5e-5 & 5.3e-1 & 1.5e-1 & 6.2e-5 & 3.3e+0 & 6.2e-1 & 6.7e-5 & 1.5e+1 & 2.5e+0 & 7.5e-5 \\ \cline{2-23}
    2   & 4   & 2.6e-3 & 1.3e-3 & 7.8e-5 & 1.2e-2 & 5.2e-3 & 9.4e-5 & 5.2e-2 & 2.1e-2 & 1.1e-4 & 2.4e-1 & 8.4e-2 & 1.3e-4 & 1.4e+0 & 3.4e-1 & 1.4e-4 & 7.2e+0 & 1.4e+0 & 1.6e-4 & 3.1e+1 & 5.4e+0 & 1.8e-4 \\ \cline{2-23}
        & 6   & 3.8e-3 & 2.1e-3 & 1.2e-4 & 1.7e-2 & 8.7e-3 & 1.5e-4 & 7.9e-2 & 3.5e-2 & 1.8e-4 & 3.6e-1 & 1.4e-1 & 2.1e-4 & 2.3e+0 & 5.5e-1 & 2.4e-4 & 1.0e+1 & 2.2e+0 & 2.7e-4 & 4.6e+1 & 8.9e+0 & 3.0e-4 \\ \cline{2-23}
        & 8   & 5.3e-3 & 3.1e-3 & 1.9e-4 & 2.4e-2 & 1.2e-2 & 2.4e-4 & 1.1e-1 & 5.0e-2 & 2.8e-4 & 5.3e-1 & 2.0e-1 & 3.3e-4 & 3.3e+0 & 8.0e-1 & 3.8e-4 & 1.5e+1 & 3.2e+0 & 4.3e-4 & 7.0e+1 & 1.2e+1 & 4.9e-4 \\ \hline \hline
        & 1   & 1.4e-3 & 3.0e-4 & 1.3e-4 & 6.0e-3 & 1.3e-3 & 1.7e-4 & 2.6e-2 & 5.5e-3 & 2.1e-4 & 1.2e-1 & 2.2e-2 & 2.4e-4 & 5.8e-1 & 8.8e-2 & 2.8e-4 & 3.4e+0 & 3.5e-1 & 3.1e-4 & 1.6e+1 & 1.4e+0 & 3.5e-4 \\ \cline{2-23}
        & 2   & 2.9e-3 & 7.8e-4 & 4.0e-4 & 1.2e-2 & 3.2e-3 & 5.3e-4 & 5.6e-2 & 1.3e-2 & 6.5e-4 & 2.6e-1 & 5.2e-2 & 7.8e-4 & 1.5e+0 & 2.1e-1 & 9.1e-4 & 7.6e+0 & 8.4e-1 & 1.0e-3 & 3.4e+1 & 3.3e+0 & 1.2e-3 \\ \cline{2-23}
    3   & 3   & 4.3e-3 & 1.4e-3 & 7.4e-4 & 1.9e-2 & 5.8e-3 & 9.9e-4 & 8.4e-2 & 2.3e-2 & 1.2e-3 & 3.9e-1 & 9.3e-2 & 1.5e-3 & 2.2e+0 & 3.7e-1 & 1.7e-3 & 1.1e+1 & 1.5e+0 & 2.0e-3 & 4.9e+1 & 6.0e+0 & 2.2e-3 \\ \cline{2-23}
        & 4   & 6.0e-3 & 2.1e-3 & 8.0e-4 & 2.6e-2 & 8.8e-3 & 1.0e-3 & 1.2e-1 & 3.5e-2 & 1.3e-3 & 5.8e-1 & 1.4e-1 & 1.5e-3 & 3.5e+0 & 5.7e-1 & 1.8e-3 & 1.7e+1 & 2.3e+0 & 2.0e-3 & 7.1e+1 & 9.1e+0 & 2.3e-3 \\ \cline{2-23}
        & 5   & 7.4e-3 & 3.0e-3 & 1.0e-3 & 3.3e-2 & 1.2e-2 & 1.3e-3 & 1.5e-1 & 5.0e-2 & 1.7e-3 & 7.2e-1 & 2.0e-1 & 2.0e-3 & 4.3e+0 & 7.9e-1 & 2.3e-3 & 2.0e+1 & 3.2e+0 & 2.6e-3 & 8.8e+1 & 1.3e+1 & 2.9e-3 \\ \cline{2-23}
        & 6   & 9.1e-3 & 4.0e-3 & 1.2e-3 & 4.0e-2 & 1.6e-2 & 1.6e-3 & 1.8e-1 & 6.6e-2 & 1.9e-3 & 8.2e-1 & 2.7e-1 & 2.3e-3 & 5.3e+0 & 1.1e+0 & 2.7e-3 & 2.3e+1 & 4.2e+0 & 3.1e-3 & 1.1e+2 & 1.7e+1 & 3.4e-3 \\ \hline \hline
        & 1   & 2.7e-3 & 4.2e-4 & 7.8e-4 & 1.1e-2 & 1.8e-3 & 1.1e-3 & 4.9e-2 & 7.5e-3 & 1.4e-3 & 2.2e-1 & 3.0e-2 & 1.7e-3 & 1.1e+0 & 1.2e-1 & 2.0e-3 & 6.2e+0 & 4.8e-1 & 2.3e-3 & 2.9e+1 & 1.9e+0 & 2.6e-3 \\ \cline{2-23}
        & 2   & 5.5e-3 & 1.2e-3 & 1.3e-3 & 2.4e-2 & 5.2e-3 & 1.8e-3 & 1.1e-1 & 2.1e-2 & 2.3e-3 & 4.9e-1 & 8.6e-2 & 2.8e-3 & 2.8e+0 & 3.4e-1 & 3.2e-3 & 1.4e+1 & 1.4e+0 & 3.7e-3 & 6.2e+1 & 5.5e+0 & 4.2e-3 \\ \cline{2-23}
    4   & 3   & 8.2e-3 & 2.4e-3 & 2.1e-3 & 3.6e-2 & 1.0e-2 & 2.9e-3 & 1.6e-1 & 4.2e-2 & 3.7e-3 & 7.3e-1 & 1.7e-1 & 4.5e-3 & 4.4e+0 & 6.7e-1 & 5.3e-3 & 2.1e+1 & 2.7e+0 & 6.1e-3 & 9.3e+1 & 1.1e+1 & 6.9e-3 \\ \cline{2-23}
        & 4   & 1.1e-2 & 4.1e-3 & 3.0e-3 & 5.0e-2 & 1.7e-2 & 4.1e-3 & 2.3e-1 & 6.9e-2 & 5.2e-3 & 1.1e+0 & 2.8e-1 & 6.4e-3 & 6.6e+0 & 1.1e+0 & 7.5e-3 & 3.1e+1 & 4.5e+0 & 8.6e-3 & 1.3e+2 & 1.8e+1 & 9.7e-3 \\ \cline{2-23}
        & 5   & 1.4e-2 & 6.0e-3 & 3.7e-3 & 6.3e-2 & 2.5e-2 & 5.2e-3 & 2.8e-1 & 1.0e-1 & 6.6e-3 & 1.3e+0 & 4.1e-1 & 8.0e-3 & 7.7e+0 & 1.6e+0 & 9.4e-3 & 3.6e+1 & 6.5e+0 & 1.1e-2 & 1.6e+2 & 2.6e+1 & 1.2e-2 \\ \hline \hline
        & 1   & 4.4e-3 & 6.0e-4 & 1.8e-3 & 1.8e-2 & 2.7e-3 & 2.5e-3 & 8.2e-2 & 1.1e-2 & 3.3e-3 & 3.7e-1 & 4.5e-2 & 4.1e-3 & 1.7e+0 & 1.8e-1 & 4.9e-3 & 1.0e+1 & 7.3e-1 & 5.7e-3 & 4.7e+1 & 2.9e+0 & 6.5e-3 \\ \cline{2-23}
    5   & 2   & 9.1e-3 & 2.0e-3 & 3.5e-3 & 3.9e-2 & 9.1e-3 & 5.1e-3 & 1.8e-1 & 3.7e-2 & 6.7e-3 & 8.1e-1 & 1.5e-1 & 8.3e-3 & 4.6e+0 & 6.0e-1 & 9.9e-3 & 2.3e+1 & 2.4e+0 & 1.2e-2 & 1.0e+2 & 9.6e+0 & 1.3e-2 \\ \cline{2-23}
        & 3   & 1.3e-2 & 4.3e-3 & 5.7e-3 & 5.8e-2 & 1.9e-2 & 8.3e-3 & 2.6e-1 & 7.8e-2 & 1.1e-2 & 1.2e+0 & 3.2e-1 & 1.4e-2 & 7.1e+0 & 1.3e+0 & 1.6e-2 & 3.4e+1 & 5.1e+0 & 1.9e-2 & 1.5e+2 & 2.0e+1 & 2.2e-2 \\ \cline{2-23}
        & 4   & 1.8e-2 & 7.4e-3 & 7.8e-3 & 8.0e-2 & 3.3e-2 & 1.2e-2 & 3.8e-1 & 1.3e-1 & 1.5e-2 & 1.8e+0 & 5.4e-1 & 1.9e-2 & 1.1e+1 & 2.2e+0 & 2.3e-2 & 4.9e+1 & 8.7e+0 & 2.6e-2 & 2.1e+2 & 3.5e+1 & 3.0e-2 \\
  \end{tabular}
  }
\end{table}

\twocolumn
\begin{figure}[ht]
  \setlength\abovecaptionskip{0pt}
  \setlength\belowcaptionskip{-8pt}
  \centering
  \includegraphics[height=0.207\textheight]{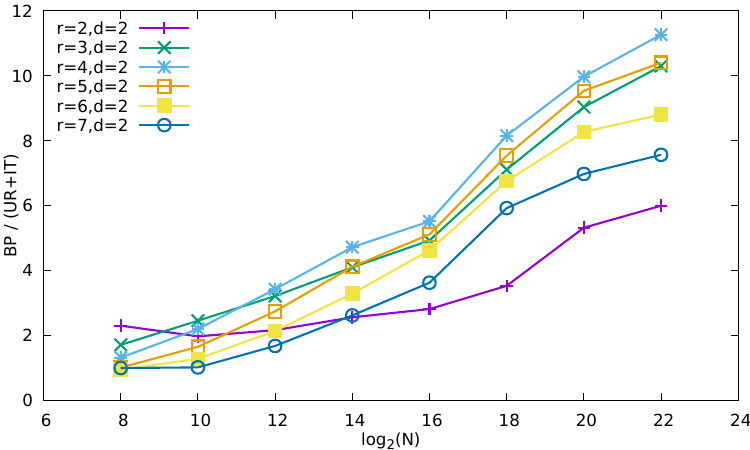}
  \caption{\textmd{Speed-up versus binary powering, not counting
      precomputations, for \(r = 2 \ldots 7\), \(N=2^8,2^{10},\ldots,2^{22}\), and fixed \(d=2\).}}
      %% The speed-up improves with \(r\) until \(r=4\) and then decreases.
  \label{fig:speedup_fixedD}
\end{figure}
\begin{figure}[ht]
  \setlength\abovecaptionskip{0\baselineskip}
  \setlength\belowcaptionskip{-8pt}
  \centering
  \includegraphics[height=0.207\textheight]{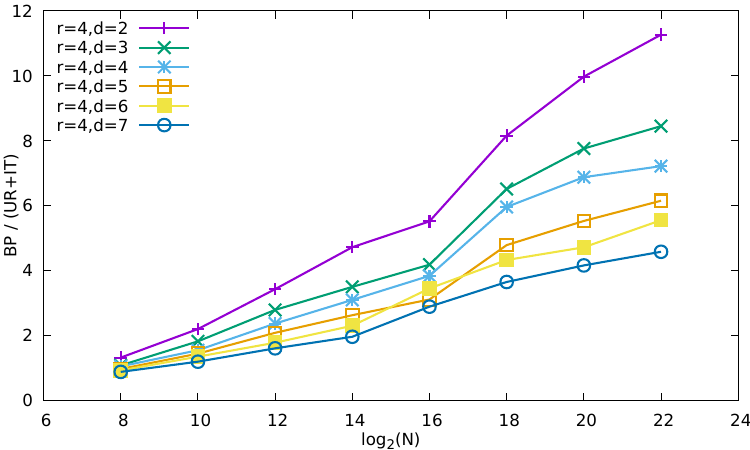}
  \caption{\textmd{Speed-up versus binary powering, not counting
      precomputations, for \(d = 2 \ldots 7\), \(N=2^8,2^{10},\ldots,2^{22}\), and fixed \(r=4\).}}
      %% The speed-up decreases with \(d\)
  \label{fig:speedup_fixedR}
\end{figure}
\begin{figure}[ht]
  \setlength\abovecaptionskip{0\baselineskip}
  \setlength\belowcaptionskip{-8pt}
  \centering
  \includegraphics[height=0.207\textheight]{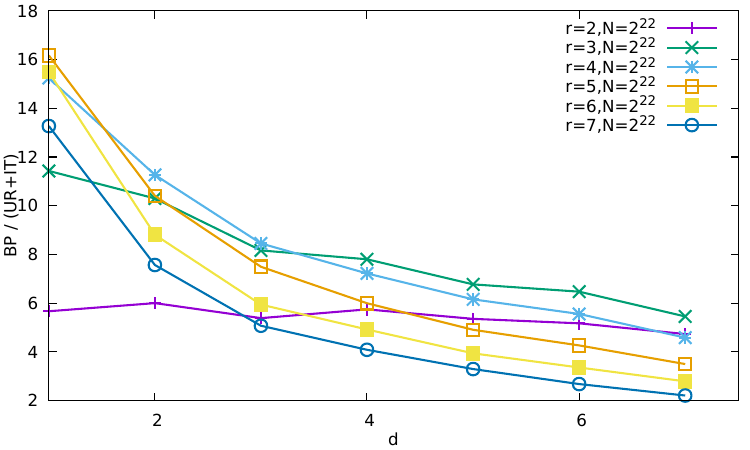}
  \caption{\textmd{Speed-up versus binary powering, not counting
      precomputations, for \(r = 2 \ldots 7\), \(d = 1 \ldots 7\), and fixed \(N=2^{22}\).}}
  \label{fig:speedup_fixedN}
\end{figure}

% Fakesection bibliography/acknowledgements
% please do not remove above commented line, used for vim folding, thanks :-)

\balance
%\bibliographystyle{ACM-Reference-Format}

%%% -*-BibTeX-*-
%%% Do NOT edit. File created by BibTeX with style
%%% ACM-Reference-Format-Journals [18-Jan-2012].


\begin{thebibliography}{35}

%%% ====================================================================
%%% NOTE TO THE USER: you can override these defaults by providing
%%% customized versions of any of these macros before the \bibliography
%%% command.  Each of them MUST provide its own final punctuation,
%%% except for \shownote{}, \showDOI{}, and \showURL{}.  The latter two
%%% do not use final punctuation, in order to avoid confusing it with
%%% the Web address.
%%%
%%% To suppress output of a particular field, define its macro to expand
%%% to an empty string, or better, \unskip, like this:
%%%
%%% \newcommand{\showDOI}[1]{\unskip}   % LaTeX syntax
%%%
%%% \def \showDOI #1{\unskip}           % plain TeX syntax
%%%
%%% ====================================================================

\ifx \showCODEN    \undefined \def \showCODEN     #1{\unskip}     \fi
\ifx \showDOI      \undefined \def \showDOI       #1{#1}\fi
\ifx \showISBNx    \undefined \def \showISBNx     #1{\unskip}     \fi
\ifx \showISBNxiii \undefined \def \showISBNxiii  #1{\unskip}     \fi
\ifx \showISSN     \undefined \def \showISSN      #1{\unskip}     \fi
\ifx \showLCCN     \undefined \def \showLCCN      #1{\unskip}     \fi
\ifx \shownote     \undefined \def \shownote      #1{#1}          \fi
\ifx \showarticletitle \undefined \def \showarticletitle #1{#1}   \fi
\ifx \showURL      \undefined \def \showURL       {\relax}        \fi
% The following commands are used for tagged output and should be
% invisible to TeX
\providecommand\bibfield[2]{#2}
\providecommand\bibinfo[2]{#2}
\providecommand\natexlab[1]{#1}
\providecommand\showeprint[2][]{arXiv:#2}

\bibitem[Almkvist and Zeilberger(1990)]%
        {AlZe90}
\bibfield{author}{\bibinfo{person}{Gert Almkvist} {and} \bibinfo{person}{Doron
  Zeilberger}.} \bibinfo{year}{1990}\natexlab{}.
\newblock \showarticletitle{The method of differentiating under the integral
  sign}.
\newblock \bibinfo{journal}{\emph{J. Symbolic Comput.}} \bibinfo{volume}{10},
  \bibinfo{number}{6} (\bibinfo{year}{1990}), \bibinfo{pages}{571--591}.
\newblock
\urldef\tempurl%
\url{https://doi.org/10.1016/S0747-7171(08)80159-9}
\showDOI{\tempurl}


\bibitem[Bostan et~al\mbox{.}(2010)]%
        {BoChChLi10}
\bibfield{author}{\bibinfo{person}{Alin Bostan}, \bibinfo{person}{Shaoshi
  Chen}, \bibinfo{person}{Fr\'{e}d\'{e}ric Chyzak}, {and}
  \bibinfo{person}{Ziming Li}.} \bibinfo{year}{2010}\natexlab{}.
\newblock \showarticletitle{Complexity of creative telescoping for bivariate
  rational functions}. In \bibinfo{booktitle}{\emph{I{SSAC}'10}}.
  \bibinfo{publisher}{ACM}, \bibinfo{pages}{203--210}.
\newblock
\urldef\tempurl%
\url{https://doi.org/10.1145/1837934.1837975}
\showDOI{\tempurl}


\bibitem[Bostan et~al\mbox{.}(2013a)]%
        {BoChChLi13}
\bibfield{author}{\bibinfo{person}{Alin Bostan}, \bibinfo{person}{Shaoshi
  Chen}, \bibinfo{person}{Fr\'{e}d\'{e}ric Chyzak}, \bibinfo{person}{Ziming
  Li}, {and} \bibinfo{person}{Guoce Xin}.} \bibinfo{year}{2013}\natexlab{a}.
\newblock \showarticletitle{Hermite reduction and creative telescoping for
  hyperexponential functions}. In \bibinfo{booktitle}{\emph{I{SSAC}'13}}.
  \bibinfo{publisher}{ACM}, \bibinfo{pages}{77--84}.
\newblock
\urldef\tempurl%
\url{https://doi.org/10.1145/2465506.2465946}
\showDOI{\tempurl}


\bibitem[Bostan et~al\mbox{.}(2018)]%
        {BoChLaSa18}
\bibfield{author}{\bibinfo{person}{Alin Bostan},
  \bibinfo{person}{Fr\'{e}d\'{e}ric Chyzak}, \bibinfo{person}{Pierre Lairez},
  {and} \bibinfo{person}{Bruno Salvy}.} \bibinfo{year}{2018}\natexlab{}.
\newblock \showarticletitle{Generalized {H}ermite reduction, creative
  telescoping and definite integration of {D}-finite functions}. In
  \bibinfo{booktitle}{\emph{I{SSAC}'18}}. \bibinfo{publisher}{ACM},
  \bibinfo{pages}{95--102}.
\newblock
\urldef\tempurl%
\url{https://doi.org/10.1145/3208976.3208992}
\showDOI{\tempurl}


\bibitem[Bostan et~al\mbox{.}(2007a)]%
        {BCLSS07}
\bibfield{author}{\bibinfo{person}{Alin Bostan},
  \bibinfo{person}{Fr\'{e}d\'{e}ric Chyzak}, \bibinfo{person}{Gr\'{e}goire
  Lecerf}, \bibinfo{person}{Bruno Salvy}, {and} \bibinfo{person}{\'{E}ric
  Schost}.} \bibinfo{year}{2007}\natexlab{a}.
\newblock \showarticletitle{Differential equations for algebraic functions}.
\newblock In \bibinfo{booktitle}{\emph{I{SSAC}'07}}. \bibinfo{publisher}{ACM},
  \bibinfo{pages}{25--32}.
\newblock
\urldef\tempurl%
\url{https://doi.org/10.1145/1277548.1277553}
\showDOI{\tempurl}


\bibitem[Bostan et~al\mbox{.}(2012)]%
        {BoChLiSa12}
\bibfield{author}{\bibinfo{person}{Alin Bostan},
  \bibinfo{person}{Fr\'{e}d\'{e}ric Chyzak}, \bibinfo{person}{Ziming Li}, {and}
  \bibinfo{person}{Bruno Salvy}.} \bibinfo{year}{2012}\natexlab{}.
\newblock \showarticletitle{Fast computation of common left multiples of linear
  ordinary differential operators}. In \bibinfo{booktitle}{\emph{I{SSAC}'12}}.
  \bibinfo{publisher}{ACM}, \bibinfo{pages}{99--106}.
\newblock
\urldef\tempurl%
\url{https://doi.org/10.1145/2442829.2442847}
\showDOI{\tempurl}


\bibitem[Bostan et~al\mbox{.}(2007b)]%
        {BoGaSc07}
\bibfield{author}{\bibinfo{person}{Alin Bostan}, \bibinfo{person}{Pierrick
  Gaudry}, {and} \bibinfo{person}{\'{E}ric Schost}.}
  \bibinfo{year}{2007}\natexlab{b}.
\newblock \showarticletitle{Linear recurrences with polynomial coefficients and
  application to integer factorization and {C}artier-{M}anin operator}.
\newblock \bibinfo{journal}{\emph{SIAM J. Comput.}} \bibinfo{volume}{36},
  \bibinfo{number}{6} (\bibinfo{year}{2007}), \bibinfo{pages}{1777--1806}.
\newblock
\urldef\tempurl%
\url{https://doi.org/10.1137/S0097539704443793}
\showDOI{\tempurl}


\bibitem[Bostan et~al\mbox{.}(2013b)]%
        {BoLaSa13}
\bibfield{author}{\bibinfo{person}{Alin Bostan}, \bibinfo{person}{Pierre
  Lairez}, {and} \bibinfo{person}{Bruno Salvy}.}
  \bibinfo{year}{2013}\natexlab{b}.
\newblock \showarticletitle{Creative telescoping for rational functions using
  the {G}riffiths-{D}work method}. In \bibinfo{booktitle}{\emph{I{SSAC}'13}}.
  \bibinfo{publisher}{ACM}, \bibinfo{pages}{93--100}.
\newblock
\urldef\tempurl%
\url{https://doi.org/10.1145/2465506.2465935}
\showDOI{\tempurl}


\bibitem[Bostan and Mori(2021)]%
        {BoMo21}
\bibfield{author}{\bibinfo{person}{Alin Bostan} {and} \bibinfo{person}{Ryuhei
  Mori}.} \bibinfo{year}{2021}\natexlab{}.
\newblock \showarticletitle{A simple and fast algorithm for computing the
  {$N$}-th term of a linearly recurrent sequence}. In
  \bibinfo{booktitle}{\emph{SOSA'21 (Symposium on Simplicity in Algorithms)}}.
  \bibinfo{publisher}{SIAM}, \bibinfo{pages}{118--132}.
\newblock
\urldef\tempurl%
\url{https://doi.org/10.1137/1.9781611976496.14}
\showURL{%
\tempurl}


\bibitem[Byrd(1963)]%
        {Byrd63}
\bibfield{author}{\bibinfo{person}{Paul~F. Byrd}.}
  \bibinfo{year}{1963}\natexlab{}.
\newblock \showarticletitle{Expansion of analytic functions in polynomials
  associated with {F}ibonacci numbers}.
\newblock \bibinfo{journal}{\emph{Fibonacci Quart.}} \bibinfo{volume}{1},
  \bibinfo{number}{1} (\bibinfo{year}{1963}), \bibinfo{pages}{16--29}.
\newblock


\bibitem[Cantor and Kaltofen(1991)]%
        {CaKa91}
\bibfield{author}{\bibinfo{person}{David~G. Cantor} {and}
  \bibinfo{person}{Erich Kaltofen}.} \bibinfo{year}{1991}\natexlab{}.
\newblock \showarticletitle{On fast multiplication of polynomials over
  arbitrary algebras}.
\newblock \bibinfo{journal}{\emph{Acta Inform.}} \bibinfo{volume}{28},
  \bibinfo{number}{7} (\bibinfo{year}{1991}), \bibinfo{pages}{693--701}.
\newblock
\urldef\tempurl%
\url{https://doi.org/10.1007/BF01178683}
\showDOI{\tempurl}


\bibitem[Chudnovsky and Chudnovsky(1988)]%
        {ChuChu88}
\bibfield{author}{\bibinfo{person}{D.~V. Chudnovsky} {and}
  \bibinfo{person}{G.~V. Chudnovsky}.} \bibinfo{year}{1988}\natexlab{}.
\newblock \showarticletitle{Approximations and complex multiplication according
  to {R}amanujan}.
\newblock In \bibinfo{booktitle}{\emph{Ramanujan revisited
  ({U}rbana-{C}hampaign, {I}ll., 1987)}}. \bibinfo{publisher}{Academic Press,
  Boston, MA}, \bibinfo{pages}{375--472}.
\newblock


\bibitem[Chyzak(2000)]%
        {Chyzak00}
\bibfield{author}{\bibinfo{person}{Fr\'{e}d\'{e}ric Chyzak}.}
  \bibinfo{year}{2000}\natexlab{}.
\newblock \showarticletitle{An extension of {Z}eilberger's fast algorithm to
  general holonomic functions}.
\newblock \bibinfo{journal}{\emph{Discrete Math.}} \bibinfo{volume}{217},
  \bibinfo{number}{1-3} (\bibinfo{year}{2000}), \bibinfo{pages}{115--134}.
\newblock
\urldef\tempurl%
\url{https://doi.org/10.1016/S0012-365X(99)00259-9}
\showDOI{\tempurl}


\bibitem[Cooley and Tukey(1965)]%
        {CoTu65}
\bibfield{author}{\bibinfo{person}{James~W. Cooley} {and}
  \bibinfo{person}{John~W. Tukey}.} \bibinfo{year}{1965}\natexlab{}.
\newblock \showarticletitle{An algorithm for the machine calculation of complex
  {F}ourier series}.
\newblock \bibinfo{journal}{\emph{Math. Comp.}}  \bibinfo{volume}{19}
  (\bibinfo{year}{1965}), \bibinfo{pages}{297--301}.
\newblock
\urldef\tempurl%
\url{https://doi.org/10.2307/2003354}
\showDOI{\tempurl}


\bibitem[Couty et~al\mbox{.}(2011)]%
        {CoEsZa11}
\bibfield{author}{\bibinfo{person}{Danielle Couty}, \bibinfo{person}{Jean
  Esterle}, {and} \bibinfo{person}{Rachid Zarouf}.}
  \bibinfo{year}{2011}\natexlab{}.
\newblock \showarticletitle{D\'{e}composition effective de
  {J}ordan-{C}hevalley}.
\newblock \bibinfo{journal}{\emph{Gaz. Math.}} \bibinfo{number}{129}
  (\bibinfo{year}{2011}), \bibinfo{pages}{29--49}.
\newblock


\bibitem[Czirbusz(2012)]%
        {Czirbusz12}
\bibfield{author}{\bibinfo{person}{S\'{a}ndor Czirbusz}.}
  \bibinfo{year}{2012}\natexlab{}.
\newblock \showarticletitle{Comparing the computation of {C}hebyshev
  polynomials in computer algebra systems}.
\newblock \bibinfo{journal}{\emph{Ann. Univ. Sci. Budapest. Sect. Comput.}}
  \bibinfo{volume}{36} (\bibinfo{year}{2012}), \bibinfo{pages}{23--39}.
\newblock


\bibitem[Elaydi and Harris(1998)]%
        {ElHa98}
\bibfield{author}{\bibinfo{person}{Saber~N. Elaydi} {and}
  \bibinfo{person}{William~A. Harris, Jr.}} \bibinfo{year}{1998}\natexlab{}.
\newblock \showarticletitle{On the computation of {$A^n$}}.
\newblock \bibinfo{journal}{\emph{SIAM Rev.}} \bibinfo{volume}{40},
  \bibinfo{number}{4} (\bibinfo{year}{1998}), \bibinfo{pages}{965--971}.
\newblock
\urldef\tempurl%
\url{https://doi.org/10.1137/S0036144597319235}
\showDOI{\tempurl}


\bibitem[Fiduccia(1985)]%
        {Fiduccia85}
\bibfield{author}{\bibinfo{person}{Charles~M. Fiduccia}.}
  \bibinfo{year}{1985}\natexlab{}.
\newblock \showarticletitle{An efficient formula for linear recurrences}.
\newblock \bibinfo{journal}{\emph{SIAM J. Comput.}} \bibinfo{volume}{14},
  \bibinfo{number}{1} (\bibinfo{year}{1985}), \bibinfo{pages}{106--112}.
\newblock
\urldef\tempurl%
\url{https://doi.org/10.1137/0214007}
\showDOI{\tempurl}


\bibitem[Flajolet and Salvy(1997)]%
        {FlSa97}
\bibfield{author}{\bibinfo{person}{Philippe Flajolet} {and}
  \bibinfo{person}{Bruno Salvy}.} \bibinfo{year}{1997}\natexlab{}.
\newblock \showarticletitle{The {SIGSAM} challenges: symbolic asymptotics in
  practice}.
\newblock \bibinfo{journal}{\emph{ACM SIGSAM Bull.}} \bibinfo{volume}{31},
  \bibinfo{number}{4} (\bibinfo{year}{1997}), \bibinfo{pages}{36--47}.
\newblock
\urldef\tempurl%
\url{https://doi.org/10.1145/274888.274890}
\showDOI{\tempurl}


\bibitem[Gould(1972)]%
        {Gould72}
\bibfield{author}{\bibinfo{person}{Henry~W. Gould}.}
  \bibinfo{year}{1972}\natexlab{}.
\newblock \bibinfo{booktitle}{\emph{Combinatorial identities}}.
\newblock viii+106 pages.
\newblock
\newblock
\shownote{A standardized set of tables listing 500 binomial coefficient
  summations}.


\bibitem[Harvey and van~der Hoeven(2021)]%
        {HaHo21}
\bibfield{author}{\bibinfo{person}{David Harvey} {and} \bibinfo{person}{Joris
  van~der Hoeven}.} \bibinfo{year}{2021}\natexlab{}.
\newblock \showarticletitle{Integer multiplication in time {$O(n \log n)$}}.
\newblock \bibinfo{journal}{\emph{Ann. of Math. (2)}} \bibinfo{volume}{193},
  \bibinfo{number}{2} (\bibinfo{year}{2021}), \bibinfo{pages}{563--617}.
\newblock
\urldef\tempurl%
\url{https://doi.org/10.4007/annals.2021.193.2.4}
\showDOI{\tempurl}


\bibitem[Harvey and van~der Hoeven(2022)]%
        {HaHo22}
\bibfield{author}{\bibinfo{person}{David Harvey} {and} \bibinfo{person}{Joris
  van~der Hoeven}.} \bibinfo{year}{2022}\natexlab{}.
\newblock \showarticletitle{Polynomial multiplication over finite fields in
  time {$O(n\log n)$}}.
\newblock \bibinfo{journal}{\emph{J. ACM}} \bibinfo{volume}{69},
  \bibinfo{number}{2} (\bibinfo{year}{2022}), \bibinfo{pages}{Art. 12, 40}.
\newblock
\urldef\tempurl%
\url{https://doi.org/10.1145/3505584}
\showDOI{\tempurl}


\bibitem[Hsieh et~al\mbox{.}(1996)]%
        {HsKoSi96}
\bibfield{author}{\bibinfo{person}{Po-Fang Hsieh}, \bibinfo{person}{Mitsuhiko
  Kohno}, {and} \bibinfo{person}{Yasutaka Sibuya}.}
  \bibinfo{year}{1996}\natexlab{}.
\newblock \showarticletitle{Construction of a fundamental matrix solution at a
  singular point of the first kind by means of the {$SN$} decomposition of
  matrices}.
\newblock \bibinfo{journal}{\emph{Linear Algebra Appl.}}  \bibinfo{volume}{239}
  (\bibinfo{year}{1996}), \bibinfo{pages}{29--76}.
\newblock
\urldef\tempurl%
\url{https://doi.org/10.1016/S0024-3795(96)90003-8}
\showDOI{\tempurl}


\bibitem[Hyun et~al\mbox{.}(2019)]%
        {HyunNeigerSchost2019}
\bibfield{author}{\bibinfo{person}{Seung~Gyu Hyun}, \bibinfo{person}{Vincent
  Neiger}, {and} \bibinfo{person}{\'{E}ric Schost}.}
  \bibinfo{year}{2019}\natexlab{}.
\newblock \showarticletitle{Implementations of Efficient Univariate Polynomial
  Matrix Algorithms and Application to Bivariate Resultants}. In
  \bibinfo{booktitle}{\emph{ISSAC'19}}. \bibinfo{publisher}{ACM},
  \bibinfo{pages}{235--242}.
\newblock
\showISBNx{9781450360845}
\urldef\tempurl%
\url{https://doi.org/10.1145/3326229.3326272}
\showDOI{\tempurl}


\bibitem[Kauers and Mezzarobba(2019)]%
        {KaMe19}
\bibfield{author}{\bibinfo{person}{Manuel Kauers} {and} \bibinfo{person}{Marc
  Mezzarobba}.} \bibinfo{year}{2019}\natexlab{}.
\newblock \showarticletitle{Multivariate {O}re polynomials in {S}age{M}ath}.
\newblock \bibinfo{journal}{\emph{ACM Commun. Comput. Algebra}}
  \bibinfo{volume}{53}, \bibinfo{number}{2} (\bibinfo{year}{2019}),
  \bibinfo{pages}{57--60}.
\newblock
\urldef\tempurl%
\url{https://doi.org/10.1145/3371991.3371998}
\showDOI{\tempurl}


\bibitem[Kauers and Pogudin(2017)]%
        {KaPo17}
\bibfield{author}{\bibinfo{person}{Manuel Kauers} {and} \bibinfo{person}{Gleb
  Pogudin}.} \bibinfo{year}{2017}\natexlab{}.
\newblock \showarticletitle{Bounds for substituting algebraic functions into
  {D}-finite functions}. In \bibinfo{booktitle}{\emph{I{SSAC}'17}}.
  \bibinfo{publisher}{ACM}, \bibinfo{pages}{245--252}.
\newblock
\urldef\tempurl%
\url{https://doi.org/10.1145/3087604.3087616}
\showDOI{\tempurl}


\bibitem[Koepf(1999)]%
        {Koepf99}
\bibfield{author}{\bibinfo{person}{Wolfram Koepf}.}
  \bibinfo{year}{1999}\natexlab{}.
\newblock \showarticletitle{Efficient computation of {C}hebyshev polynomials in
  computer algebra}.
\newblock In \bibinfo{booktitle}{\emph{Computer Algebra Systems: A Practical
  Guide}}. \bibinfo{publisher}{Wiley}, \bibinfo{pages}{79--99}.
\newblock


\bibitem[Koutschan(2010)]%
        {Koutschan10}
\bibfield{author}{\bibinfo{person}{Christoph Koutschan}.}
  \bibinfo{year}{2010}\natexlab{}.
\newblock \showarticletitle{A fast approach to creative telescoping}.
\newblock \bibinfo{journal}{\emph{Math. Comput. Sci.}} \bibinfo{volume}{4},
  \bibinfo{number}{2-3} (\bibinfo{year}{2010}), \bibinfo{pages}{259--266}.
\newblock
\urldef\tempurl%
\url{https://doi.org/10.1007/s11786-010-0055-0}
\showDOI{\tempurl}


\bibitem[Lipshitz(1988)]%
        {Lipshitz88}
\bibfield{author}{\bibinfo{person}{L. Lipshitz}.}
  \bibinfo{year}{1988}\natexlab{}.
\newblock \showarticletitle{The diagonal of a {$D$}-finite power series is
  {$D$}-finite}.
\newblock \bibinfo{journal}{\emph{J. Algebra}} \bibinfo{volume}{113},
  \bibinfo{number}{2} (\bibinfo{year}{1988}), \bibinfo{pages}{373--378}.
\newblock
\urldef\tempurl%
\url{https://doi.org/10.1016/0021-8693(88)90166-4}
\showDOI{\tempurl}


\bibitem[Massazza and Radicioni(2005)]%
        {MaRa05}
\bibfield{author}{\bibinfo{person}{P. Massazza} {and} \bibinfo{person}{R.
  Radicioni}.} \bibinfo{year}{2005}\natexlab{}.
\newblock \showarticletitle{On computing the coefficients of bivariate
  holonomic formal series}.
\newblock \bibinfo{journal}{\emph{Theoret. Comput. Sci.}}
  \bibinfo{volume}{346}, \bibinfo{number}{2-3} (\bibinfo{year}{2005}),
  \bibinfo{pages}{418--438}.
\newblock
\urldef\tempurl%
\url{https://doi.org/10.1016/j.tcs.2005.08.011}
\showDOI{\tempurl}


\bibitem[Miller and Brown(1966)]%
        {MiBr66}
\bibfield{author}{\bibinfo{person}{J.~C.~P. Miller} {and}
  \bibinfo{person}{D.~J.~Spencer Brown}.} \bibinfo{year}{1966}\natexlab{}.
\newblock \showarticletitle{An algorithm for evaluation of remote terms in a
  linear recurrence sequence}.
\newblock \bibinfo{journal}{\emph{Comput. J.}}  \bibinfo{volume}{9}
  (\bibinfo{year}{1966}), \bibinfo{pages}{188--190}.
\newblock
\urldef\tempurl%
\url{https://doi.org/10.1093/comjnl/9.2.188}
\showDOI{\tempurl}


\bibitem[Schmidt(2000)]%
        {Schmidt00}
\bibfield{author}{\bibinfo{person}{Dieter Schmidt}.}
  \bibinfo{year}{2000}\natexlab{}.
\newblock \showarticletitle{Construction of the {J}ordan decomposition by means
  of {N}ewton's method}.
\newblock \bibinfo{journal}{\emph{Linear Algebra Appl.}} \bibinfo{volume}{314},
  \bibinfo{number}{1-3} (\bibinfo{year}{2000}), \bibinfo{pages}{75--89}.
\newblock
\urldef\tempurl%
\url{https://doi.org/10.1016/S0024-3795(00)00111-7}
\showDOI{\tempurl}


\bibitem[Shoup(2021)]%
        {ShoupNTL}
\bibfield{author}{\bibinfo{person}{V. Shoup}.} \bibinfo{year}{2021}\natexlab{}.
\newblock \bibinfo{title}{{NTL: A library for doing number theory, v11.5.1}}.
\newblock
\newblock
\newblock
\shownote{\url{https://libntl.org}}.


\bibitem[Stanley(1980)]%
        {Stanley80}
\bibfield{author}{\bibinfo{person}{R.~P. Stanley}.}
  \bibinfo{year}{1980}\natexlab{}.
\newblock \showarticletitle{Differentiably finite power series}.
\newblock \bibinfo{journal}{\emph{European J. Combin.}} \bibinfo{volume}{1},
  \bibinfo{number}{2} (\bibinfo{year}{1980}), \bibinfo{pages}{175--188}.
\newblock
\urldef\tempurl%
\url{https://doi.org/10.1016/S0195-6698(80)80051-5}
\showDOI{\tempurl}


\bibitem[van~der Hoeven(2021)]%
        {Hoeven21}
\bibfield{author}{\bibinfo{person}{Joris van~der Hoeven}.}
  \bibinfo{year}{2021}\natexlab{}.
\newblock \showarticletitle{Constructing reductions for creative telescoping:
  the general differentially finite case}.
\newblock \bibinfo{journal}{\emph{Appl. Algebra Engrg. Comm. Comput.}}
  \bibinfo{volume}{32}, \bibinfo{number}{5} (\bibinfo{year}{2021}),
  \bibinfo{pages}{575--602}.
\newblock
\showISSN{0938-1279}
\urldef\tempurl%
\url{https://doi.org/10.1007/s00200-020-00413-3}
\showDOI{\tempurl}

\end{thebibliography}
\end{document}